\documentclass[5p,times]{elsarticle}
\journal{European Journal of Control}

\bibliographystyle{elsarticle-num}

\usepackage{graphicx}
\usepackage{subcaption}
\usepackage{etoolbox} 
\usepackage{scalerel} 
\hyphenation{op-tical net-works semi-conduc-tor}

\usepackage{siunitx}
\usepackage{pifont}
\usepackage{graphicx}
\usepackage{amsmath,amssymb,textcomp}
\usepackage{algorithm} 
\usepackage{cite}
\usepackage{booktabs}
\usepackage{url}
\usepackage{algpseudocode}
\usepackage{paralist}
\usepackage{color}
\usepackage[T1]{fontenc}
\usepackage{stmaryrd}
\usepackage{epsfig} 
\usepackage{bm} 
\usepackage{listings,lstautogobble}
\usepackage{mathtools}
\usepackage{mathrsfs}
\usepackage{verbatim} 
\usepackage{epstopdf}
\usepackage{listings}
\usepackage{parcolumns}
\usepackage{color}
\usepackage{mathtools}
\usepackage{mathrsfs}
\usepackage{xspace}
\usepackage{verbatim}
\usepackage[utf8]{inputenc}
\usepackage{calrsfs}
\usepackage{rotating,multirow}
\usepackage{mathtools}

\usepackage[most]{tcolorbox}



\usepackage[T1]{fontenc}
\usepackage[utf8]{inputenc}



\usepackage{amsthm}
\newtheorem{definition}{Definition}

\newtheorem{theorem}{Theorem}
\newtheorem{remark}{Remark}
 \newtheorem{lemma}{Lemma} 
 \newtheorem{assumption}{Assumption}

  

%
%

\DeclareMathAlphabet{\mathcal}{OMS}{cmsy}{m}{n}

\DeclarePairedDelimiter{\abs}{\lvert}{\rvert}
\DeclarePairedDelimiter{\norm}{\lVert}{\rVert}

\usepackage{hyperref}




\def \tb{\textcolor{black}}

\newcommand{\zono}[1]{\langle #1 \rangle}

\allowdisplaybreaks

\newcommand{\sys}{ZPC}

\begin{document}

\begin{frontmatter}

\title{Robust Data-Driven Predictive Control using Reachability Analysis}
\author[add1,add2]{Amr Alanwar\corref{cor1}}
\ead{alanwar@kth.se}
\author[add2,add3]{Yvonne Stürz\corref{cor1}}
\ead{stuerz@kth.se}
\author[add2]{Karl Henrik Johansson}
\ead{kallej@kth.se}

\address[add1]{Computer Science \& Electrical Engineering Department, Jacobs University Bremen, Germany}
\address[add2]{Division of Decision and Control Systems, KTH Royal Institute of Technology, Sweden}
\address[add3]{Model Predictive Control Laboratory, University of California, Berkeley, USA}

\cortext[cor1]{Correspondence authors which are with equal contributions.}





\begin{abstract}

We present a robust data-driven control scheme for an unknown linear system model with bounded process and measurement noise. Instead of depending on a system model in traditional predictive control, a controller utilizing data-driven reachable regions is proposed. The data-driven reachable regions are based on a matrix zonotope recursion and are computed based on only noisy input-output data of a trajectory of the system. We assume that measurement and process noise are contained in bounded sets. While we assume knowledge of these bounds, no knowledge about the statistical properties of the noise is assumed. 
In the noise-free case, we prove that the presented purely data-driven control scheme results in an equivalent closed-loop behavior to a nominal model predictive control scheme. 
In the case of measurement and process noise, our proposed scheme guarantees robust constraint satisfaction, which is essential in safety-critical applications. Numerical experiments show the effectiveness of the proposed data-driven controller in comparison to model-based control schemes. 
%
\end{abstract}

\begin{keyword}
Predictive control, reachability analysis, data-driven methods, zonotope. 
\end{keyword}
\end{frontmatter}

\section{Introduction}

Model predictive control (MPC) is a well-established control method that can handle input and state constraints \citep{conf:mpcbook}. It optimizes the control performance for a given cost function over the system's predicted evolution. In order to implement traditional MPC, a model of the system is thus required. System identification and accurate modeling can be very costly for complex systems, such as robotics applications, or even impossible, such as fluid dynamics \tb{\citep{conf:deepc,conf:deepctherom}}. Learning-based approaches have therefore been investigated to learn a model from data. Most of these methods, however, are data-intensive and do not provide safety guarantees in general. Furthermore, while robust MPC is an active field of research, it is still hard to account for parametric model mismatch and process, or measurement noise~\tb{\citep{robustMPC1}}. 
Therefore, this paper focuses on robust data-driven predictive control for unknown linear systems under measurement and process noise using reachability analysis. 

Many data-driven methods can be mentioned in relevance to our approach. One category employs \tb{the fundamental lemma}, which views a dynamical system by the subspace of the signal space in which the system's trajectories appear \citep{conf:willems,conf:deepc,conf:mpcguarantees,conf:newpersp}. Recent work has utilized the fundamental lemma \citep{conf:willems} in MPC~\citep{conf:deepc}. Moreover, the authors in \citep{conf:mpcguarantees} provide data-driven MPC with stability and robustness guarantees. Also, data-driven feedback controllers and stabilization are discussed in \citep{conf:formulas,conf:nonexciting,conf:robust,conf:nonlinear1}.  
Recent developments in the data-driven direction include robust controller synthesis from noisy input-state trajectories \citep{conf:robustcontrol} and data-driven optimal control \citep{conf:optimalcont1,conf:optimalcont2}. Another category computes the set of possible models given the available data and then derives controller and system properties for the computed set  \citep{conf:datadriven_reach,conf:data_Driven_Estimation,conf:firstset_dissipativity1,conf:dissipativity2}.

%
%
%
%
  %
%
%
%
  %

Reachability analysis computes the set of states that a system can reach within finite or infinite time when starting from a bounded set of initial states, subject to a set of possible inputs \citep{conf:reach2000}. Many research branches utilize reachability analysis, such as formal verification, formal controller synthesis, set-based estimation, and set-based prediction \citep{conf:reviewAlthoff}. The most popular approaches in computing reachable sets are set propagation and simulation-based techniques. The efficiency of propagation-based methods depends on the set representation: polyhedra \citep{conf:reachpolydt}, zonotopes \citep{conf:thesisalthoff}, (sparse) polynomial zonotopes \citep{conf:sparsepolyzono}, ellipsoids \citep{conf:reachellipsoidal}, support functions \citep{conf:reachsupportlinear}, and Taylor series \citep{conf:rigorousreachtaylor}. The zonotopes have favorable properties as they can be represented compactly, and they are closed under the Minkowski sum and linear mapping.

  


This paper considers data-driven predictive control by computing the set of models consistent with noisy data, and it is considered the first step in this track. Our proposed approach consists of two phases: the data-collection phase and the control phase. During the data-collection phase, we collect input and output data samples from the unknown system. The collected data is used to compute an implicit data-driven system representation using matrix zonotopes, which is based on ideas from \citep{conf:datadriven_reach}. During the control phase, we employ a zonotopic data-driven predictive control scheme (ZPC). In particular, \sys{} computes the data-driven reachable set based on a matrix zonotope recursion starting from the measured output $y(t)$. The matrix zonotope recursion utilizes the learned data-driven system representation. The optimal control problem that is solved by \sys{} during the control phase results in the optimal input $u(t)$ that minimizes a given cost function such that the output $y(t)$ stays within the predicted reachable set and the output constraints are robustly satisfied. The code to recreate our findings is publicly available\footnotemark.
The contributions of this paper can be summarized as follows: 
\footnotetext{\href{https://github.com/aalanwar/Data-Driven-Predictive-Control}{https://github.com/aalanwar/Data-Driven-Predictive-Control}}
\begin{enumerate}
    \item We propose a robust data-driven predictive control scheme. In the first phase, noisy data is collected from the system with unknown model. A single input-output trajectory can be sufficient. A matrix zonotope recursion is used as a data-driven reachability prediction within a predictive control scheme based on the collected noisy data. 
    \item In the noise-free case, we prove that the proposed data-driven \sys{} scheme results in an equivalent closed-loop performance as a nominal MPC scheme. 
    \item Under measurement and process noise, we guarantee robust constraint satisfaction of the closed-loop system under the feasibility of the proposed data-driven predictive control scheme at each time step. 
\end{enumerate}


The rest of the paper is organized as follows: Section~\ref{sec:pb} gives the problem statement and provides relevant preliminaries. The proposed data-driven predictive control is presented in Section~\ref{sec:solu}. Section~\ref{sec:eval} shows the evaluation of the proposed algorithm in numerical experiments. Finally, Section~\ref{sec:conc} concludes the paper. 




\section{Preliminaries and Problem Statement}\label{sec:pb}

We start by defining some preliminaries before stating the problem setting. 

\subsection{Set Representations}

\begin{definition}(\textbf{Zonotope} \citep{conf:zono1998}) \label{def:zonotopes} 
Given a center $c_\mathcal{Z} \in \mathbb{R}^n$ and a number $\gamma_\mathcal{Z} \in \mathbb{N}$
of generator vectors in the generator matrix $G_\mathcal{Z}=\begin{bmatrix} g^{(1)}_\mathcal{Z},\dots ,g^{(\gamma_\mathcal{Z})}_\mathcal{Z}\end{bmatrix} \in \mathbb{R}^{n \times \gamma_\mathcal{Z}}$, a zonotope is defined as 
\begin{equation}
	\mathcal{Z} {=} \Big\{ x \in \mathbb{R}^n \; \Big| \; x = c_\mathcal{Z} + \sum_{i=1}^{\gamma_\mathcal{Z}} \beta^{(i)} \, g^{(i)}_\mathcal{Z} \, ,
	-1 \leq \beta^{(i)} \leq 1 \Big\}.
\end{equation}
Furthermore, we define the shorthand $\mathcal{Z} = \zono{c_\mathcal{Z},G_\mathcal{Z}}$. 
\end{definition}

\begin{definition} (\textbf{Matrix Zonotope} \citep[p. 52]{conf:thesisalthoff}) \label{def:matzonotopes} 
Given a center matrix $C_\mathcal{M} \in \mathbb{R}^{n\times j}$ and a number $\gamma_\mathcal{M} \in \mathbb{N}$ of generator matrices $G_\mathcal{M}=[\tilde{G}^{(1)}_\mathcal{M},\dots,\tilde{G}^{(\gamma_\mathcal{M})}_\mathcal{M}] \in \mathbb{R}^{n \times \gamma_\mathcal{M} j}$, a matrix zonotope is defined by 
\begin{equation}
	\mathcal{M} {=} \Big\{ X \in \mathbb{R}^{n\times k} \; \Big| \; X = C_\mathcal{M} + \sum_{i=1}^{\gamma_\mathcal{M}} \beta^{(i)} \, \tilde{G}^{(i)}_\mathcal{M} \, ,
	-1 \leq \beta^{(i)} \leq 1 \Big\}.
\end{equation}
Furthermore, we define the shorthand $\mathcal{M} = \zono{C_\mathcal{M},G_\mathcal{M}}$. 
\end{definition}

The linear map is defined and computed as follows \citep{conf:thesisalthoff}:
\begin{align}
L \mathcal{Z} = \{Lz\, | \, z\in\mathcal{Z}\}  = \zono{L c_{\mathcal{Z}}, L G_{\mathcal{Z}} }. \label{eq:linmap}
\end{align} 
Given two zonotopes $\mathcal{Z}_1=\langle c_{\mathcal{Z}_1},G_{\mathcal{Z}_1} \rangle$ and $\mathcal{Z}_2=\langle c_{\mathcal{Z}_2},G_{\mathcal{Z}_2} \rangle$, the Minkowski sum $\mathcal{Z}_1 \oplus \mathcal{Z}_2 = \{z_1 + z_2 \, | \, z_1\in \mathcal{Z}_1, z_2 \in \mathcal{Z}_2 \}$ can be computed exactly as \citep{conf:thesisalthoff}: 
\begin{equation}
     \mathcal{Z}_1 \oplus \mathcal{Z}_2 = \Big\langle c_{\mathcal{Z}_1} + c_{\mathcal{Z}_2}, [G_{\mathcal{Z}_1}, G_{\mathcal{Z}_2} ]\Big\rangle.
     \label{eq:minkowski}
\end{equation}
For simplicity, we use the notation $+$ instead of $\oplus$ for Minkowski sum as the type can be determined from the context. Similarly, we use  $\mathcal{Z}_1 - \mathcal{Z}_2$ to denote $\mathcal{Z}_1 + -1 \mathcal{Z}_2$ \tb{not the Minkowski difference}. To over-approximate a zonotope $\mathcal{Z}=\zono{c_\mathcal{Z},\begin{bmatrix} g_{\mathcal{Z}}^{(1)} ,\dots , g_{\mathcal{Z}}^{(\gamma)} \end{bmatrix}}$ by an interval $\mathcal{V} = \begin{bmatrix} \underline{v},\bar{v}  \end{bmatrix}$, we do the following:
\begin{align}
    \bar{v} &= c_\mathcal{Z} + \sum_{i=1}^{\gamma_\mathcal{Z}} \abs{g_{\mathcal{Z}}^{(i)}} \label{eq:zonotointU}\\
    \underline{v} &= c_\mathcal{Z} -  \sum_{i=1}^{\gamma_\mathcal{Z}} \abs{g_{\mathcal{Z}}^{(i)}}\label{eq:zonotointL}
\end{align}
The Cartesian product of two zonotopes $\mathcal{Z}_1 $ and $\mathcal{Z}_2$ is defined and computed as 
\begin{align*}
\mathcal{Z}_1 \times \mathcal{Z}_2 &= \Bigg\{ \begin{bmatrix}z_1 \\ z_2\end{bmatrix} \Bigg|\, z_1 \in \mathcal{Z}_1, z_2 \in \mathcal{Z}_2 \Bigg\}= \bigg\langle \begin{bmatrix} c_{\mathcal{Z}_1}  \\ c_{\mathcal{Z}_2}  \end{bmatrix}, \begin{bmatrix} G_{\mathcal{Z}_1}  & 0 \\ 0 & G_{\mathcal{Z}_2}  \end{bmatrix} \bigg\rangle.
\label{eq:cardprod}
\end{align*}

\begin{definition} \label{def:intmat}(\textbf{Interval Matrix} \citep[p. 42]{conf:thesisalthoff})
An interval matrix $\mathcal{I}$ specifies the interval of all possible values for each matrix element between the left limit $\underline{I}$ and right limit $\bar{I}$:
\begin{align}
    \mathcal{I} = \begin{bmatrix} \underline{I},\bar{I}  \end{bmatrix}, \quad \underline{I},\bar{I} \in \mathbb{R}^{r \times c}
\end{align}
\end{definition}

\subsection{Problem Statement}
We consider a controllable discrete-time linear system
\begin{equation}
    \begin{split}
        x(t+1) &= A x(t) + B u(t) +w(t),\\
        y(t) &=  C x(t) + v(t),
    \end{split}
    \label{eq:sys}
\end{equation}
%
with the system matrices $A \in \mathbb{R}^{n \times n}$ and $B \in \mathbb{R}^{n \times m}$, $C \in \mathbb{R}^{p \times n}$, 
state $x(t) \in \mathbb{R}^{n}$, and input $u(t) \in \mathbb{R}^{m}$. We assume that the states of the system are measurable, i.e., the system output matrix is given by $C=I$, and thus the measured output is $y(t) \in \mathbb{R}^n$. 
The input and output constraints are given by 
\begin{equation}
    \begin{split}
    u(t) \in \mathcal{U}_t \subset \mathbb{R}^{m}, \\
    y(t) \in \mathcal{Y}_t \subset \mathbb{R}^n. 
    \end{split}
    \label{eq:sys_con}
\end{equation}
We assume that the process and measurement noise $w(t)$ and $v(t)$ are bounded as follows: 
\begin{assumption}\label{as:noisew}
We assume that the process noise $w(t)$ is bounded by a zonotope $w(t) \in \mathcal{Z}_w = \zono{c_{\mathcal{Z}_w},G_{\mathcal{Z}_w}}$ for all time steps. 
\end{assumption}
\begin{assumption}\label{as:noisev}
We assume that the measurement noise $v(t)$ is bounded by a zonotope $v(t) \in \mathcal{Z}_v = \zono{c_{\mathcal{Z}_v},G_{\mathcal{Z}_v}}$ for all time steps. Furthermore, we assume that the one-step propagation $A v(t)$ is bounded by a zonotope $Av(t) \in \mathcal{Z}_{Av} = \zono{c_{\mathcal{Z}_{Av}},G_{\mathcal{Z}_{Av}}}$ for all time steps similar to the assumption in \citep{conf:formulas}.
\end{assumption}

We aim to solve a receding horizon optimal control problem when the model of the system in \eqref{eq:sys} is unknown, but input and noisy output trajectories are available.

\subsection{Input-Output Data and Reachability}
 We consider $K$ input-output trajectories of different lengths $T_i$, $i=1,\dots,K$, denoted by $\{\tilde{u}^{(i)}(t)\}_{t=0}^{T_i - 1}$ and $\{\tilde{y}^{(i)}(t)\}_{t=0}^{T_i}$, $i=1, \dots, K$. We collect the set of all data sequences in the following matrices
 \begin{align*}
     Y &{=}  \begin{bmatrix} \tilde{y}^{(1)}(0)  \dots \tilde{y}^{(1)}(T_1)   \dots \; \tilde{y}^{(K)}(0)  \dots  \tilde{y}^{(K)}(T_K)\end{bmatrix}, \\
     U_- &{=} \begin{bmatrix} \tilde{u}^{(1)}(0) \dots  \tilde{u}^{(1)}(T_1-1) \dots  \tilde{u}^{(K)}(0)  \dots  \tilde{u}^{(K)}(T_K-1) \end{bmatrix}.
 \end{align*}
 Let us further denote
 \begin{align*}
     Y_+ &{=} \begin{bmatrix} \tilde{y}^{(1)}(1)  \dots  \tilde{y}^{(1)}(T_1)  \dots  \tilde{y}^{(K)}(1)  \dots  \tilde{y}^{(K)}(T_K) \end{bmatrix}, \nonumber\\
     Y_- &{=} \begin{bmatrix} \tilde{y}^{(1)}(0)  \dots  \tilde{y}^{(1)}(T_1{-}1)  \dots  \tilde{y}^{(K)}(0) \dots  \tilde{y}^{(K)}(T_K{-}1) \end{bmatrix}.
 \end{align*}
The total amount of data points from all available trajectories is denoted by $T = \sum_{i=1}^{K} T_i$ and we denote the set of all available data by $D=\{U_-, Y\}$. 

Reachability analysis in general computes the set of $y$ which can be reached given a set of uncertain initial states $\mathcal{R}_0 \subset \mathbb{R}^n$
containing the initial output $y(0) \in \mathcal{R}_0$ and a set of uncertain inputs
$\mathcal{Z}_{u,t} \subset \mathbb{R}^m$ containing the inputs $u(t) \in \mathcal{Z}_{u,t} $.

\begin{definition} 
The reachable set $\mathcal{R}_{t}$ after $N$ time steps, inputs $u(t) \in \mathcal{Z}_{u,t}  \subset \mathbb{R}^m$, $\forall t \in \{0,\dots, N-1 \}$, noise $w( \cdot) \in \mathcal{Z}_w$, and initial set $\mathcal{R}_0 \in  \mathbb{R}^n$, is the set of all state trajectories starting in $\mathcal{R}_0$ after $N$ steps: 
\begin{align}\label{eq:R}
        \mathcal{R}_{N} =& \big\{ y(N) \in \mathbb{R}^n \, \big| x(t+1) = Ax(t) {+} Bu(t) + w(t),\nonumber\\ 
        \,&  y(t) = x(t) +v(t), y(0) \in \mathcal{R}_0, w(t) \in \mathcal{Z}_w,  \nonumber \\
        \,& u(t) \in \mathcal{Z}_{u,t}: \forall t \in \{0,...,N\}\big\}.
\end{align}
\end{definition} 

Note that we define the reachable sets with respect to the output given that $C=I$ and to align with the output reachable regions computed in the coming section. 
While the actual noise sequence in the data, denoted by $\tilde{w}^{(i)} (t)$ for trajectory $i$, is unknown, we assume to know a bound on the noise as specified in Assumption~\ref{as:noisew}. 
From Assumption~\ref{as:noisew}, it follows directly that the stacked matrix 
\begin{align*}
    W_- {=} \begin{bmatrix} 
    \tilde{w}^{(1)}(0)  \dots \tilde{w}^{(1)}(T_1{-}1)  \dots \tilde{w}^{(K)}(0)  \dots  \tilde{w}^{(K)}(T_K{-}1)\end{bmatrix}
\end{align*}
is an element of the set $W_- \in \mathcal{M}_w$ where $\mathcal{M}_w =\zono{ C_{\mathcal{M}_w}, [G_{\mathcal{M}_w}^{(1)},$ $\dots,G_{\mathcal{M}_w}^{(\gamma_{\mathcal{Z}_w} T)}] }$ is the matrix zonotope resulting from the concatenation of multiple noise zonotopes $\mathcal{Z}_w=\zono{c_{\mathcal{Z}_w},[g_{\mathcal{Z}_w}^{(1)}, \dots, g_{\mathcal{Z}_w}^{(\gamma_{\mathcal{Z}_w})}]}$ as described in \citep{conf:datadriven_reach}. Similarly, we define
\begin{align*}
    V_- &= \begin{bmatrix} 
    \tilde{v}^{(1)}(0)  \dots \tilde{v}^{(1)}(T_1{-}1)  \dots \tilde{v}^{(K)}(0)  \dots  \tilde{v}^{(K)}(T_K{-}1)\end{bmatrix}, \\
    V_+ &= \begin{bmatrix} 
    \tilde{v}^{(1)}(1)  \dots \tilde{v}^{(1)}(T_1)  \dots \tilde{v}^{(K)}(1)  \dots  \tilde{v}^{(K)}(T_K)\end{bmatrix}, 
\end{align*}
which are bounded as follow: $V_-,V_+ \in \mathcal{M}_v$ and $AV_- \in \mathcal{M}_{Av}$ where $\mathcal{M}_v$ and $\mathcal{M}_{Av}$ result from the concatenation of the zonotopes $\mathcal{Z}_v$ and $\mathcal{Z}_{Av}$, respectively and are defined similar to $\mathcal{M}_w$. 


We also denote the Hankel matrix associated to vector $z$ by $\mathcal{H}_{i,j,M}(z)$, where $i$ denotes the index of the first sample, $j$ the number of block rows, and $M$ the number of block columns. 
\begin{align}
\mathcal{H}_{i,j,M}(z) {=}  \begin{bmatrix} z(i) & z(i+1) &\!\! \dots\!\!& z(i+M-1)\\
z(i+1) & z(i+2)&\!\! \dots\!\! & z(i+M)\\
\vdots & \vdots &\!\! \ddots\!\! & \vdots \\
z(i+j-1) &z(i+j) & \!\!\dots \!\!& z(i+j+M-2)
\end{bmatrix}.
\end{align}

\begin{definition}(\citep{conf:formulas})
The signal $U_-  \in \mathbb{R}^{m \times T}$
is persistently exciting of order $L$ if the matrix $\mathcal{H}_{0,L,T-L+1}(\tilde{u})$ has full rank $mL$
where $T \geq (m+1)L-1$ for the deterministic system of \eqref{eq:sys}.
\end{definition}
%
\begin{lemma}(\citep[Cor. 2]{conf:willems})
\label{lm:exciting}
If the input $U_- \in \mathbb{R}^{m \times T}$ 
is persistently exciting of order $n+k$
for the deterministic system of \eqref{eq:sys}, then 
\begin{align}
   \text{rank}\begin{bmatrix} \mathcal{H}_{0,k,T-k+1}(\tilde{y}) \\  \hline \mathcal{H}_{0,1,T-k+1}(\tilde{u})\end{bmatrix} = n + km \, .
\label{eq:persiscond}
\end{align}
\end{lemma}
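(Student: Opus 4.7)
The plan is to adapt the classical proof of Willems' fundamental lemma by reducing the input-output rank condition to a rank condition on a state-input matrix, then invoking controllability together with the persistence of excitation hypothesis. The rest of the argument amounts to bookkeeping that follows from iterating the dynamics of \eqref{eq:sys}; the main obstacle will be the rank argument on the state-input matrix.

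First I would use $C = I$ and the deterministic, noise-free form of \eqref{eq:sys} to identify $\tilde{y}(t) = \tilde{x}(t)$, and iterate the dynamics to write $\tilde{y}(t+i) = A^{i}\tilde{x}(t) + \sum_{j=0}^{i-1} A^{i-1-j} B\, \tilde{u}(t+j)$ for each $0 \le i \le k-1$. Stacking these identities column by column across every window start yields the decomposition $\mathcal{H}_{0,k,T-k+1}(\tilde{y}) = \mathcal{O}_k X_0 + \mathcal{T}_k \mathcal{H}_{0,k,T-k+1}(\tilde{u})$, where $\mathcal{O}_k$ is the $k$-step observability matrix built from $A$, $\mathcal{T}_k$ is the block lower-triangular Toeplitz matrix of Markov parameters $A^{l}B$, and $X_0 = [\tilde{x}(0)\ \tilde{x}(1)\ \cdots\ \tilde{x}(T-k)]$ collects the initial states of the windows.

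Pairing this identity with the trivial one on the input Hankel then gives the triangular factorization $\begin{bmatrix}\mathcal{H}_{0,k,T-k+1}(\tilde{y}) \\ \mathcal{H}_{0,k,T-k+1}(\tilde{u})\end{bmatrix} = \begin{bmatrix}\mathcal{O}_k & \mathcal{T}_k \\ 0 & I_{km}\end{bmatrix}\begin{bmatrix}X_0 \\ \mathcal{H}_{0,k,T-k+1}(\tilde{u})\end{bmatrix}$. Since $C = I$, the top block of $\mathcal{O}_k$ is $I_n$, so the left factor has full column rank $n+km$, and the rank of the joint input-output Hankel matrix equals the rank of the state-input matrix on the right. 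This reduction strips away all of the output structure and leaves only a clean combinatorial claim about the inputs together with the generated states.

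The remaining and main task is to show that $\big[X_0^{\top}\ \mathcal{H}_{0,k,T-k+1}(\tilde{u})^{\top}\big]^{\top}$ has full row rank $n+km$. I would argue by contradiction: if a nonzero row $[\xi^{\top}\ \eta^{\top}]$ annihilated this matrix, controllability of $(A,B)$ lets me express each column of $X_0$ through the controllability matrix acting on a preceding window of $n$ inputs; substituting this representation extends $[\xi^{\top}\ \eta^{\top}]$ into a nonzero vector that vanishes on every column of the enlarged Hankel matrix $\mathcal{H}_{0,n+k,T-n-k+1}(\tilde{u})$, contradicting the persistence of excitation of order $n+k$, which forces that larger Hankel matrix to have full row rank $m(n+k)$. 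This controllability-based extension of the annihilator is the delicate ingredient; combining the resulting rank with the factorization above yields the stated identity.
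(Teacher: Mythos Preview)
The paper does not prove this lemma at all; it is quoted verbatim as \cite[Cor.~2]{conf:willems} and used only through its $k=1$ specialization. So there is no in-paper argument to compare against, and your sketch is essentially the classical Willems--et al.\ proof that the citation points to.

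Two remarks on your write-up. First, the statement as printed in the paper carries what looks like a typo: with only a single block row $\mathcal{H}_{0,1,T-k+1}(\tilde{u})$ of inputs, each column of the stacked matrix is determined by $\tilde{x}(t),\tilde{u}(t),\dots,\tilde{u}(t+k-2)$, so the rank cannot exceed $n+(k-1)m$. You silently replaced this by $\mathcal{H}_{0,k,T-k+1}(\tilde{u})$ in your factorization, which is the version that actually appears in \cite{conf:willems} and the one for which the claimed rank $n+km$ holds; it would be worth flagging that correction explicitly.

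Second, the last step is a bit loose. You cannot literally ``express each column of $X_0$ through the controllability matrix acting on a preceding window of $n$ inputs'': writing $\tilde{x}(t)=A^{n}\tilde{x}(t-n)+\mathcal{C}_n\,[\tilde{u}(t-n);\dots;\tilde{u}(t-1)]$ still leaves the term $A^{n}\tilde{x}(t-n)$, so substituting does not produce an annihilator that acts on inputs alone. The standard completion is: if $\xi\neq 0$, controllability gives $\xi^{\top}[A^{n-1}B\ \cdots\ B]\neq 0$; then a shift/recursion argument (or Cayley--Hamilton to kill the residual state term) manufactures a nonzero vector annihilating $\mathcal{H}_{0,n+k,T-n-k+1}(\tilde{u})$, contradicting persistence of excitation of order $n+k$. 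You already flag this as the delicate point, so the plan is sound, but the mechanism you describe needs that adjustment.
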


A special case of Lemma~\ref{lm:exciting} for $k=1$ yields 
\begin{align}
   \text{rank}\begin{bmatrix} Y_-  \\ U_-  \end{bmatrix} = n + m \, .
\label{eq:persiscondteq1}
\end{align}

\begin{lemma}(\citep[Th. 1]{conf:formulas})
\label{lm:xkp1}
If $\text{rank}\begin{bmatrix} Y_- \\ 
U_-\end{bmatrix} = n + m$ for the deterministic system of \eqref{eq:sys},
then 
\begin{equation} \label{eq:xkp1}
   y(t+1) = \mathcal{G}(Y,U_-)
    \begin{bmatrix} y(t)\\ u(t)  \end{bmatrix} \, ,
\end{equation}
with 
\begin{align}
\label{eq:fxu}
    \mathcal{G}(Y,U_-) =  Y_+
    \begin{bmatrix}
        Y_- \\ 
        U_-
    \end{bmatrix}^{\dagger}, 
    \end{align}
where $\dagger$ denotes the right inverse.    
\end{lemma}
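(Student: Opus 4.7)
The plan is to exploit the deterministic dynamics $y(t+1) = A y(t) + B u(t)$ (which holds because $C = I$ and there is no noise) to express the collected data by a single matrix equation, and then to invert using the rank hypothesis. Once $\begin{bmatrix} A & B \end{bmatrix}$ has been recovered explicitly from data, the identity \eqref{eq:xkp1} becomes a trivial substitution.

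Concretely, I would first apply the dynamics column by column along the available trajectories to obtain
\begin{equation*}
Y_+ = \begin{bmatrix} A & B \end{bmatrix} \begin{bmatrix} Y_- \\ U_- \end{bmatrix}.
\end{equation*}
By the hypothesis, the stacked data matrix has full row rank $n + m$, so a right inverse exists. Multiplying on the right yields
\begin{equation*}
\mathcal{G}(Y, U_-) \;=\; Y_+ \begin{bmatrix} Y_- \\ U_- \end{bmatrix}^{\dagger} \;=\; \begin{bmatrix} A & B \end{bmatrix}.
\end{equation*}
Plugging this into the right-hand side of \eqref{eq:xkp1} for any input-output pair $(y(t), u(t))$ produced by the system gives $\mathcal{G}(Y, U_-)\begin{bmatrix} y(t) \\ u(t) \end{bmatrix} = A y(t) + B u(t) = y(t+1)$, as required.

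The only subtlety, and the main thing worth spelling out, is that a right inverse of the data matrix is not unique in general. However, for any right inverse $M$ of $\begin{bmatrix} Y_- \\ U_- \end{bmatrix}$ one has $Y_+ M = \begin{bmatrix} A & B \end{bmatrix} \begin{bmatrix} Y_- \\ U_- \end{bmatrix} M = \begin{bmatrix} A & B \end{bmatrix}$, so $\mathcal{G}(Y, U_-)$ is in fact well-defined and exactly equal to $\begin{bmatrix} A & B \end{bmatrix}$ independent of the choice. The essential obstacle is thus ensuring existence of this right inverse, which is precisely what the rank hypothesis provides and which, by Lemma~\ref{lm:exciting} applied with $k = 1$, is guaranteed whenever $U_-$ is persistently exciting of order $n+1$.
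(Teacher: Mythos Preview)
Your argument is correct and is the standard proof of this result. Note that the paper does not actually supply a proof of Lemma~\ref{lm:xkp1}; it simply cites it from \cite[Th.~1]{conf:formulas}, so there is nothing in the paper to compare against beyond the statement itself. Your derivation, including the observation that $Y_+ M = \begin{bmatrix} A & B \end{bmatrix}$ for \emph{any} right inverse $M$, is exactly the intended reasoning.
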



\section{Robust Data-Driven Predictive Control}\label{sec:solu}

In this section, we present our proposed data-driven robust predictive control scheme. \sys{} consists of an offline data-collection phase and an online control phase which are described in the following subsections.

\subsection{Data-Collection Phase}

Due to the presence of noise, there generally exist multiple models $\begin{bmatrix}A & B \end{bmatrix}$ which are consistent with the data. As stated in Assumptions~\ref{as:noisew} and \ref{as:noisev}, we assume knowledge of the zonotopes $\mathcal{Z}_{w}$, $\mathcal{Z}_{v}$, and $\mathcal{Z}_{Av}$, that bound the noise contributions $w(t)$, $v(t)$, and $Av(t)$, and their associated matrix zonotopes $\mathcal{M}_{w}$, $\mathcal{M}_{v}$, and $\mathcal{M}_{Av}$, respectively.  %
Therefore, the goal of the data-collection phase is to construct a matrix zonotope $\mathcal{M}_\Sigma$ that over-approximates all system models consistent with the noisy data as follows. 

\begin{lemma}
\label{th:setofAB}
Given input-output trajectories $D = \{U_-, Y\}$ of the system \eqref{eq:sys}, then 
\begin{align}
    \mathcal{M}_\Sigma = (Y_{+} - \mathcal{M}_w - \mathcal{M}_v + \mathcal{M}_{Av})\begin{bmatrix} 
    Y_- \\ U_- 
    \end{bmatrix}^\dagger.
   \label{eq:zonoAB}
\end{align} 
 contains all matrices $\begin{bmatrix}A & B \end{bmatrix}$ that are consistent with the data and noise bounds. 
\end{lemma}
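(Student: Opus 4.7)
The plan is to derive an exact algebraic formula expressing any admissible $[A\;B]$ in terms of the data and the unknown noise realisations, and then read off the claimed containment by invoking the linear-map and Minkowski-sum rules for matrix zonotopes.

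First I would use $C=I$ to eliminate the unmeasured state via $x(t)=y(t)-v(t)$. Substituting into \eqref{eq:sys} gives the pointwise identity $y(t+1)=A\,y(t)+B\,u(t)+w(t)+v(t+1)-A\,v(t)$. Stacking this identity column-wise across every sample of every available trajectory yields the matrix equation
\begin{equation*}
Y_+ \;=\; \begin{bmatrix} A & B \end{bmatrix}\!\begin{bmatrix} Y_- \\ U_- \end{bmatrix} + W_- + V_+ - A\,V_- .
\end{equation*}
Assuming persistence of excitation so that $\begin{bmatrix} Y_- \\ U_- \end{bmatrix}$ has full row rank (condition~\eqref{eq:persiscondteq1}, which is implicitly required for $\mathcal{M}_\Sigma$ to be defined through $\dagger$), a right inverse exists and I can solve
\begin{equation*}
\begin{bmatrix} A & B \end{bmatrix} \;=\; \bigl(Y_+ - W_- - V_+ + A\,V_-\bigr)\begin{bmatrix} Y_- \\ U_- \end{bmatrix}^{\!\dagger}.
\end{equation*}

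For the containment step, I would take any $[A\;B]$ that is consistent with the observed data. By definition there must exist admissible noise realisations $w(t)\in\mathcal{Z}_w$ and $v(t)\in\mathcal{Z}_v$ reproducing the trajectories, so by the matrix-zonotope concatenation construction one has $W_-\in\mathcal{M}_w$, $V_+\in\mathcal{M}_v$, and $A\,V_-\in\mathcal{M}_{Av}$ (the last by Assumption~\ref{as:noisev}). Applying the matrix-zonotope analogues of the linear-map rule~\eqref{eq:linmap} and the Minkowski-sum rule~\eqref{eq:minkowski} to the right-hand side of the previous display then gives $[A\;B]\in\mathcal{M}_\Sigma$, which is the claim.

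The main obstacle I expect is the mixed term $A\,V_-$: because $A$ is itself the unknown being characterised, $A\,V_-$ is not a linear map of a \emph{known} zonotope, and without further structure it could not be enclosed tightly. Assumption~\ref{as:noisev} sidesteps this by postulating an \emph{a priori} zonotope $\mathcal{Z}_{Av}$ bounding $A\,v(t)$; lifting this to the concatenated matrix zonotope $\mathcal{M}_{Av}$ is precisely what makes $\mathcal{M}_\Sigma$ a well-defined over-approximation of the set of consistent models, at the price of some conservatism compared to the exact (generally non-zonotopic) consistency set.
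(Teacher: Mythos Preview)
Your proposal is correct and follows essentially the same route as the paper: derive the stacked identity $Y_+=[A\;B]\begin{bmatrix}Y_-\\U_-\end{bmatrix}+W_-+V_+-AV_-$ from \eqref{eq:sys} with $C=I$, then invoke the known enclosures $W_-\in\mathcal{M}_w$, $V_+\in\mathcal{M}_v$, $AV_-\in\mathcal{M}_{Av}$ and right-multiply by the pseudoinverse. Your write-up is in fact slightly more explicit than the paper's about the solve-for-$[A\;B]$ step and about why Assumption~\ref{as:noisev} is needed to handle the coupled term $AV_-$.
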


\begin{proof}
The proof follows the proof of \citep[Thm.1]{conf:datadriven_reach}. From the system description in \eqref{eq:sys}, we have
\begin{align} \label{eq:yp}
    Y_+ - {V}_+  = \begin{bmatrix}A  & B\end{bmatrix} \begin{bmatrix} Y_- \\ U_- \end{bmatrix} - A{V}_- + {W}_-\,,
\end{align}
where ${W}_-$, ${V}_-$, ${V}_+$ and $A{V}_-$ are the noise in the collected data. While the noise in the collected data ${W}_-$, ${V}_+$, ${V}_-$ and $A{V}_-$ is unknown, we can use the respective bounds $\mathcal{M}_w$, $\mathcal{M}_v$ and $\mathcal{M}_{Av}$ to obtain \eqref{eq:zonoAB}, where $\begin{bmatrix}A  & B\end{bmatrix} \in \mathcal{M}_\Sigma$ given that ${W}_- \in \mathcal{M}_{w}$, ${V}_+, {V}_- \in \mathcal{M}_{v}$ and $A {V}_- \in \mathcal{M}_{Av}$, thereby proving \eqref{eq:zonoAB}.
\end{proof}
\begin{remark}
  Solving the data-driven formulation above without 
    the assumption that $Av(t)$ is bounded by a known bound
    remains an open problem in the field of data-driven control.
  Notable other works such as \citep{conf:formulas} require a similar  
    assumption to derive controllers in the setting with noisy measurements.
\end{remark}
\begin{remark}
The offline data-collection phase in Lemma \ref{th:setofAB} requires that there exists a right inverse of the matrix $\begin{bmatrix} Y_- \\ U_- \end{bmatrix}$. This is equivalent to requiring this matrix to have full row rank, i.e. $\mathrm{rank} \begin{bmatrix} Y_- \\ U_- \end{bmatrix} = n+m$. This condition can be easily checked given the data. Note that for noise-free measurements this rank condition can also be enforced by choosing the input persistently exciting of order $n+1$ (compare to Lemma \ref{lm:exciting}).
\end{remark}
Next, we describe the online control phase which makes use of $\mathcal{M}_\Sigma$ obtained in the offline data-collection phase.


\subsection{Online Control Phase}

The problem we consider is receding horizon optimal control on the system~\eqref{eq:sys} with constraints in \eqref{eq:sys_con}, and where the process and measurement noise follow Assumptions~\ref{as:noisew} and \ref{as:noisev}. 
%
Since the system model is unknown, but exciting input and noisy output trajectories are available, we replace the model-based description in the traditional MPC problem by a data-driven system representation that depends on the matrix zonotope provided in Lemma \ref{th:setofAB}. More specifically, we compute the control input $u_{t+k|t}$ at each time step $t$ such that the predicted output $y_{t+k+1|t}$ stays within the computed reachable region at the next time step $t+1$ over the horizon $N$ and the cost is minimized.  
%

According to the following Lemma, we compute the reachable region from data, given the actual measured output $y(t)$ at each time step. Let $\hat{\mathcal{R}}_t$ be the data-driven reachable set and $\mathcal{R}_t$ be the model-based reachable set given the true model.

\begin{lemma}
\label{th:reach_lin}
Given input-output trajectories $D =\{U_-, Y\}$ of the system in \eqref{eq:sys}, then 
\begin{align}
\hat{\mathcal{R}}_{t+1} &= \mathcal{M}_{\Sigma} (\hat{\mathcal{R}}_{t} \times \mathcal{Z}_{u,t}  ) +  \mathcal{Z}_w + \mathcal{Z}_v - \mathcal{Z}_{Av},\label{eq:Rkp1}
\end{align}
contains the reachable set, i.e., $\hat{\mathcal{R}}_{t+1} \supseteq \mathcal{R}_{t+1}$ where $\hat{\mathcal{R}}_{0}=\zono{y(0),0}$, and $\mathcal{Z}_{u,t}=\zono{u(t),0}$. 
\end{lemma}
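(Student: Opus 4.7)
My plan is induction on the prediction horizon $t$. The base case $t=0$ is immediate since $\hat{\mathcal{R}}_0 = \zono{y(0),0}$ is exactly the singleton initial set $\mathcal{R}_0 = \{y(0)\}$ used to initialize \eqref{eq:R}.

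For the inductive step, I assume $\mathcal{R}_t \subseteq \hat{\mathcal{R}}_t$ and pick an arbitrary $y \in \mathcal{R}_{t+1}$. The key manipulation is to eliminate the hidden state $x(t)$ using the measurement equation $x(t) = y(t) - v(t)$, which turns the state-space dynamics \eqref{eq:sys} into an output-only recursion:
\[
y = A x(t) + B u(t) + w(t) + v(t+1) = \begin{bmatrix} A & B \end{bmatrix}\begin{bmatrix} y(t) \\ u(t) \end{bmatrix} + w(t) + v(t+1) - A v(t).
\]
Next I would invoke Lemma~\ref{th:setofAB} to place $[A\; B]$ inside $\mathcal{M}_\Sigma$, and combine the induction hypothesis $y(t) \in \hat{\mathcal{R}}_t$ with $u(t) \in \mathcal{Z}_{u,t}$ so that, using the Cartesian product \eqref{eq:cardprod} and the linear-map rule \eqref{eq:linmap}, the first term lies in $\mathcal{M}_\Sigma(\hat{\mathcal{R}}_t \times \mathcal{Z}_{u,t})$. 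The residual noise contributions satisfy $w(t) \in \mathcal{Z}_w$, $v(t+1) \in \mathcal{Z}_v$, and $-A v(t) \in -\mathcal{Z}_{Av}$ by Assumptions~\ref{as:noisew} and~\ref{as:noisev}. Closure under Minkowski sum \eqref{eq:minkowski} (with the convention $\mathcal{Z}_1 - \mathcal{Z}_2 := \mathcal{Z}_1 + (-1)\mathcal{Z}_2$) then yields $y \in \hat{\mathcal{R}}_{t+1}$ as defined in \eqref{eq:Rkp1}, closing the induction.

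I expect the main obstacle to be the $-A v(t)$ term: eliminating the state in output coordinates forces a propagated measurement-noise contribution which is not a priori available as a zonotope in $y$. The induction only closes because Assumption~\ref{as:noisev} postulates the auxiliary bound $\mathcal{Z}_{Av}$, in line with the setup used by \cite{conf:formulas}. A minor care point is to keep the two instances $v(t)$ and $v(t+1)$ strictly separate — they play different roles (state recovery at step $t$ versus noisy measurement at step $t+1$), so their contributions $-\mathcal{Z}_{Av}$ and $\mathcal{Z}_v$ in \eqref{eq:Rkp1} are genuinely independent rather than redundant. Once this is handled, the argument reduces to bookkeeping over the elementary zonotope operations already recalled in Section~\ref{sec:pb}.
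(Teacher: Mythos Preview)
Your proposal is correct and follows essentially the same route as the paper: derive the output-only recursion $y(t+1)=Ay(t)+Bu(t)+w(t)+v(t+1)-Av(t)$, invoke Lemma~\ref{th:setofAB} to place $[A\;B]\in\mathcal{M}_\Sigma$, and conclude containment via the noise bounds. The only difference is presentation: you make the induction and the element-wise membership argument explicit, whereas the paper states the model-based recursion $\mathcal{R}_{t+1}=[A\;B](\mathcal{R}_t\times\mathcal{Z}_{u,t})+\mathcal{Z}_w+\mathcal{Z}_v-\mathcal{Z}_{Av}$ and compares it set-wise to \eqref{eq:Rkp1}, leaving the induction implicit in the shared initialization and input sequence.
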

\begin{proof} 
We have from \eqref{eq:sys} 
\begin{align}
y(t+1) = A y(t) + B u(t) + w(t) +  v(t+1) - Av(t).
\end{align}
The reachable set computed based on the model can be found using
\begin{align}
\mathcal{R}_{t+1} &=\begin{bmatrix} A & B \end{bmatrix} (\mathcal{R}_{t} \times \mathcal{Z}_{u,t} )+  \mathcal{Z}_w+ \mathcal{Z}_v - \mathcal{Z}_{Av}.
\end{align}
Since $\begin{bmatrix} A & B \end{bmatrix} \in \mathcal{M}_{\Sigma}$ according to Lemma~\ref{th:setofAB}, both $\mathcal{R}_t$ and $\hat{\mathcal{R}}_t$ start from the same initial measured output $y(0)$, i.e., $\zono{y(0),0}$, and have the exact input sequence, i.e., $\mathcal{Z}_{u,t}=\zono{u(t),0}$, it holds that $\mathcal{R}_{t+1} \subseteq \hat{\mathcal{R}}_{t+1}$.
\end{proof}

We formulate the following data-driven optimal control problem  at time $t$. 

\begin{subequations}
\label{eq:optzonopc}
\begin{alignat}{2}
&\!\min_{u,y,s_{u},s_{l}}        &\qquad& \sum_{k=0}^{N-1} \norm{y_{t+k+1|t} -r_y(t+k+1)}_Q^2 \notag \\ 
&&&~~ +\norm{u_{t+k|t} - r_u(t+k)}_R^2 \\
&\text{s.t.} &      & \hat{\mathcal{R}}_{t+k+1|t} = \mathcal{M}_{\Sigma} (\hat{\mathcal{R}}_{t+k|t} \times \mathcal{Z}_{u,t+k} ) +\mathcal{Z}_{w}+\mathcal{Z}_{v}\nonumber\\
&&& \qquad \quad \qquad -\mathcal{Z}_{Av}, \label{eq:Rconst}\\
&                  &      &  u_{t+k|t} \in \mathcal{U}_{t+k}, \label{eq:uconst}\\
&                  &      & y_{t+k+1|t} + s_{u,t+k+1|t} = \hat{\mathcal{R}}_{u,t+k+1}, \label{eq:const_r_up}\\
&                  &      & y_{t+k+1|t} - s_{l,t+k+1|t} = \hat{\mathcal{R}}_{l,t+k+1}, \label{eq:const_r_low}\\
&                  &      & y_{t+k+1|t} + s_{u,t+k+1|t} \leq \mathcal{Y}_{u,t+k+1}, \label{eq:const_y_up}\\
&                  &      & y_{t+k+1|t} - s_{l,t+k+1|t} \geq \mathcal{Y}_{l,t+k+1},\label{eq:const_y_low} \\
&                  &      & s_{u,t+k+1|t} \geq 0, \\
&                  &      & s_{l,t+k+1|t} \geq 0, \\
&                  &      & y_{t|t} = y(t), \label{eq:y0const}
\end{alignat}
\end{subequations}
%
%
where $N$ is the time horizon,  $u=(u_{t|t},\dots,u_{t+N-1|t})$, $y=(y_{t+1|t},$ $\dots,y_{t+N|t})$ are the decision variables, and $y(t)$ is the initial measured output. The norm $\norm{u_{t+k|t} - r_u(t+k)}_R^2$ denotes the weighted norm $(u_{t+k|t} - r_u(t+k))^TR(u_{t+k|t} - r_u(t+k))$, and analogously for $\norm{y_{t+k+1|t} -r_y(t+k+1)}_Q^2$. 
The $\mathcal{Z}_{u,t+k}$ consists of the control input, i.e., $ \mathcal{Z}_{u,t+k}= \zono{ u_{t+k|t}, 0}$. The $\mathcal{Y}_{l,t+k+1}$ and $\mathcal{Y}_{u,t+k+1}$ are lower and upper bounds on the individual dimensions of the output constraint zonotope $\mathcal{Y}_{t+k+1}$, and $\hat{\mathcal{R}}_{l,t+k+1}$ and $\hat{\mathcal{R}}_{u,t+k+1}$ are lower and upper bounds on the individual dimensions of the zonotope of reachable set $\hat{\mathcal{R}}_{t+k+1|t}$, respectively. The upper and lower bounds are computed by over-approximating $\hat{\mathcal{R}}_{t+k+1|t}$ by an interval as shown in \eqref{eq:zonotointU} and \eqref{eq:zonotointL}. 

\tb{The simulated $y_{t+k+1|t}$ over the horizon differs from the true value which is however guaranteed to be inside $\hat{\mathcal{R}}_{t+k+1|t}$. Thus, we compute the upper slack variable $s_{u,t+k+1|t}=  \hat{\mathcal{R}}_{u,t+k+1} - y_{t+k+1|t}$ to account for any noise realization towards $\hat{\mathcal{R}}_{u,t+k+1}$. The computed value of the upper slack variable $s_{u,t+k+1|t}$ is then used to tighten the constraint $\mathcal{Y}_{u,t+k+1}$ and assure that  $y_{t+k+1|t}  \leq \mathcal{Y}_{u,t+k+1} - s_{u,t+k+1|t}$ and similarly for the lower bounds. In other words, 
the constraints \eqref{eq:const_r_up}, \eqref{eq:const_r_low}, \eqref{eq:const_y_up} and  \eqref{eq:const_y_low}, introduce the variables $s_{u,t+k+1|t}$ and $s_{l,t+k+1|t}$, which tighten the allowable reachable region 
according to the output constraints.} In particular, \eqref{eq:const_r_up} and \eqref{eq:const_r_low} ensure that the output $y_{t+k+1|t}$ lies within the allowable reachable region, and \eqref{eq:const_y_up} and \eqref{eq:const_y_low} ensure that the allowable reachable region fulfills the output constraints. 
This in turn restricts the choice of $\mathcal{Z}_{u,t+k}$ in \eqref{eq:Rconst} and thus implicitly tightens the input constraints. The first optimal control input ${u^*(t)=u_{t|t}}$ is then applied to the system and problem \eqref{eq:optzonopc} is solved in receding horizon fashion.

Algorithm~\ref{alg:zonopc} summarizes the data-collection and online control phase of \sys{}.

\begin{algorithm}[h!]
\textbf{Input:} Input-output data pairs $D$, reference trajectories ($r_u,r_y$), input and output constraints ($\mathcal{U}_t,\mathcal{Y}_t$), cost matrices $(Q,R)$, and initial measured output $y(t)$, time horizon $N$.
\begin{enumerate}
    \item Use the data samples $D$ to compute $\mathcal{M}_\Sigma$ in \eqref{eq:zonoAB}.
    \item Set $t \leftarrow 0$.
    \item Solve \eqref{eq:optzonopc} for time horizon $N$ to get $u^*=(u_{t|t}^*, \dots, u_{t+N-1|t}^*)$ using the output $y(t)$ as initial condition.
    \item Apply the input $u_{t|t}^*$ to the plant.
    \item Set $t \leftarrow t+1$.
    \item Return to step 3.
\end{enumerate}
  \caption{\sys{}: Zonotopic Data-Driven Predictive Control.}
  \label{alg:zonopc}
\end{algorithm}

Next, we prove the robust constraint satisfaction of the introduced formulation.

\begin{theorem}[Robust constraint satisfaction] \label{th:robconsat_zono}
Let Assumptions~\ref{as:noisew} and \ref{as:noisev} hold. Furthermore, we assume that $\mathcal{Y}_{t+k}$ represents box constraints for the output $y(t+k)$ of system~\eqref{eq:sys}.  
If problem~\eqref{eq:optzonopc} is feasible at each time step, then the closed-loop system~\eqref{eq:sys} under the controller \eqref{eq:optzonopc} robustly satisfies the constraints in \eqref{eq:sys_con} at each time step $t$ under the process and measurement noise $w(t) \in \mathcal{Z}_w$ and $v(t) \in \mathcal{Z}_v$. 
\end{theorem}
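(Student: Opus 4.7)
The plan is to chain the output-constraint tightening in \eqref{eq:const_r_up}--\eqref{eq:const_y_low} through the inclusion $\mathcal{R}_t \subseteq \hat{\mathcal{R}}_t$ from Lemma~\ref{th:reach_lin}, and to handle the input constraint directly from \eqref{eq:uconst}. Since feasibility at every time step is assumed, there is no need to establish recursive feasibility; only a one-step argument is required.

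First, I would dispose of the input constraint. At time $t$, feasibility of \eqref{eq:optzonopc} yields an optimizer with $u_{t|t}^{*}\in\mathcal{U}_t$ by \eqref{eq:uconst} (with $k=0$), and the controller applies $u^*(t)=u_{t|t}^{*}$, so $u(t)\in\mathcal{U}_t$ holds. Next, for the output constraint at time $t+1$, I would observe that the true system evolution
\begin{equation*}
y(t+1)=Ay(t)+Bu_{t|t}^{*}+w(t)+v(t+1)-Av(t)
\end{equation*}
with $w(t)\in\mathcal{Z}_w$, $v(t+1)\in\mathcal{Z}_v$, $Av(t)\in\mathcal{Z}_{Av}$, and $\begin{bmatrix}A & B\end{bmatrix}\in\mathcal{M}_\Sigma$ (Lemma~\ref{th:setofAB}), together with $y(t)\in\hat{\mathcal{R}}_{t|t}=\zono{y(t),0}$, places $y(t+1)$ in the model-based one-step reachable set $\mathcal{R}_{t+1|t}$. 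By Lemma~\ref{th:reach_lin}, $\mathcal{R}_{t+1|t}\subseteq \hat{\mathcal{R}}_{t+1|t}$, the latter being exactly the set computed by the recursion \eqref{eq:Rconst} in the feasible plan.

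Then I would close the chain using the slack constraints. By \eqref{eq:const_r_up}--\eqref{eq:const_r_low} and $s_{u,t+1|t},s_{l,t+1|t}\ge 0$, the interval over-approximation $[\hat{\mathcal{R}}_{l,t+1},\hat{\mathcal{R}}_{u,t+1}]$ of $\hat{\mathcal{R}}_{t+1|t}$ (obtained from \eqref{eq:zonotointU}--\eqref{eq:zonotointL}) satisfies
\begin{equation*}
\hat{\mathcal{R}}_{u,t+1}=y_{t+1|t}+s_{u,t+1|t}\le \mathcal{Y}_{u,t+1},\quad \hat{\mathcal{R}}_{l,t+1}=y_{t+1|t}-s_{l,t+1|t}\ge \mathcal{Y}_{l,t+1},
\end{equation*}
by \eqref{eq:const_y_up}--\eqref{eq:const_y_low}. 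Since $\mathcal{Y}_{t+1}$ is a box, the inclusion
\begin{equation*}
\mathcal{R}_{t+1|t}\subseteq \hat{\mathcal{R}}_{t+1|t}\subseteq [\hat{\mathcal{R}}_{l,t+1},\hat{\mathcal{R}}_{u,t+1}]\subseteq \mathcal{Y}_{t+1}
\end{equation*}
follows, which gives $y(t+1)\in\mathcal{Y}_{t+1}$. Repeating this argument at every $t$ (using the assumed feasibility at each step) yields robust satisfaction of both constraints in \eqref{eq:sys_con}.

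The main obstacle I anticipate is being explicit about the two nested over-approximations: first the fact that the model-based reachable set is contained in the data-driven one (which is essentially a restatement of Lemmas~\ref{th:setofAB} and \ref{th:reach_lin} applied with the initial singleton $\hat{\mathcal{R}}_{t|t}=\zono{y(t),0}$ and singleton control $\mathcal{Z}_{u,t}=\zono{u_{t|t}^{*},0}$), and second the interval over-approximation that turns the zonotopic reachable set into component-wise bounds compatible with the box $\mathcal{Y}_{t+1}$. The assumption that $\mathcal{Y}_{t+k}$ is box-shaped is essential in the second step, since otherwise the interval hull could strictly exceed $\mathcal{Y}_{t+1}$ even when the zonotope itself fits inside it.
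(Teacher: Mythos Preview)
Your proposal is correct and follows essentially the same approach as the paper's proof: use Lemmas~\ref{th:setofAB} and \ref{th:reach_lin} to get the inclusion $\mathcal{R}_{t+1|t}\subseteq\hat{\mathcal{R}}_{t+1|t}$, then use the slack constraints \eqref{eq:const_r_up}--\eqref{eq:const_y_low} together with the interval over-approximation to force $\hat{\mathcal{R}}_{t+1|t}\subseteq\mathcal{Y}_{t+1}$. Your write-up is in fact more explicit than the paper's own (terse) proof---you separately argue the input constraint via \eqref{eq:uconst}, spell out the one-step output recursion, and articulate why the box assumption on $\mathcal{Y}_{t+1}$ is needed---whereas the paper compresses all of this into a few sentences invoking the same lemmas and constraints.
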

\begin{proof}
According to Lemma~\ref{th:setofAB}, the computed reachable sets $\hat{\mathcal{R}}_t$ are over-approximations of the reachable sets ${\mathcal{R}}_t$, i.e., $\hat{\mathcal{R}}_t \supseteq \mathcal{R}_t$. A control input $u(t)$ is chosen such that \eqref{eq:const_y_up}, \eqref{eq:const_y_low}, \eqref{eq:const_r_up}, and \eqref{eq:const_r_low} are satisfied which guarantees that output $y(t)$ is within the intersection between the 
 over-approximated reachable set  $\hat{\mathcal{R}}_t$ and the output constraints $\mathcal{Y}_t$ regardless of the noise instantiation.
Furthermore, the bounds $\hat{\mathcal{R}}_{l,t}$, $\hat{\mathcal{R}}_{u,t}$, $\mathcal{Y}_{l,t}$, and $\mathcal{Y}_{u,t}$ are over-approximations for the corresponding sets. Therefore, under feasibility of \eqref{eq:optzonopc}, the  constraints in \eqref{eq:const_y_low} and \eqref{eq:const_y_up} robustly guarantee constraint satisfaction of $\mathcal{Y}_t$ at each time step. 
\end{proof}

\begin{figure*}[!t]
    \centering
            \begin{subfigure}[h]{0.48\textwidth}
      \centering
        \includegraphics[scale=0.30]{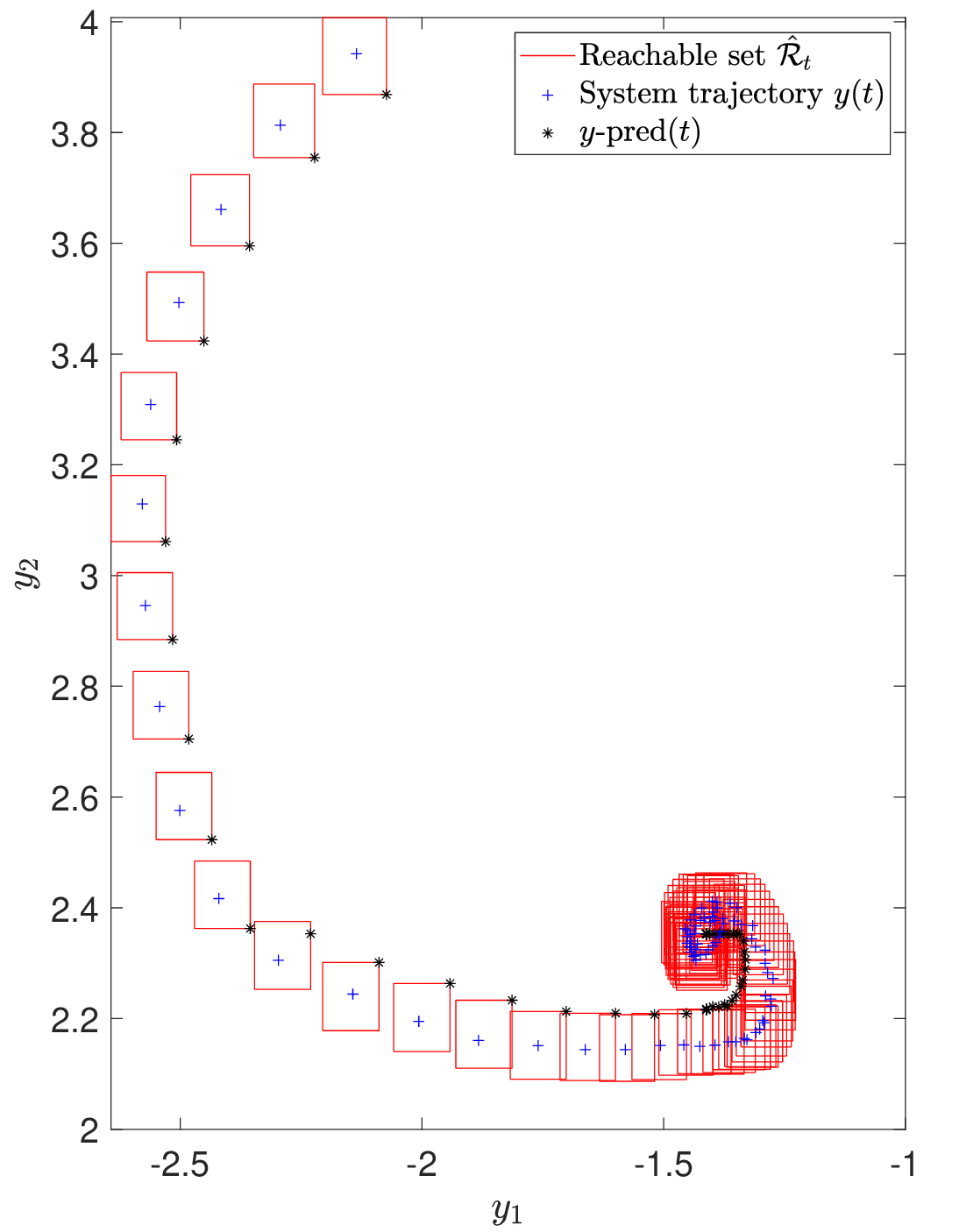}
        \caption{}
        \label{fig:y1y2_400pt}
    \end{subfigure}
            \begin{subfigure}[h]{0.48\textwidth}
      \centering
        \includegraphics[scale=0.30]{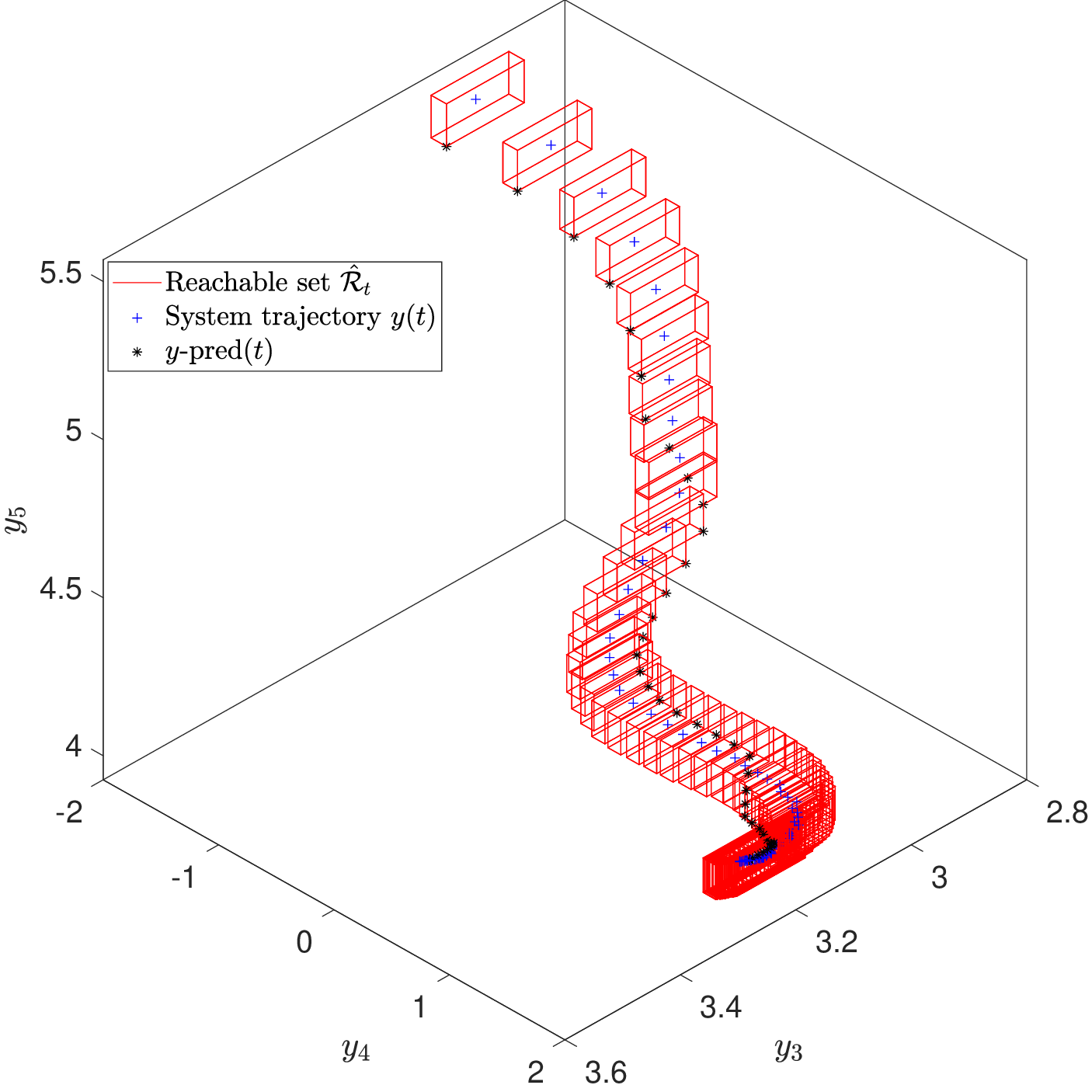}
        \caption{}
        \label{fig:y3y4_400pt}
    \end{subfigure}
\caption{The projection of the reachable sets over the time steps in the control phase with $400$ input-output pairs in the data-collection phase.}
    \label{fig:y_reach_400}
     \vspace{-2mm}
\end{figure*}

\begin{figure*}[!t]
    \centering
            \begin{subfigure}[h]{0.32\textwidth}
      \centering
        \includegraphics[width=\textwidth]{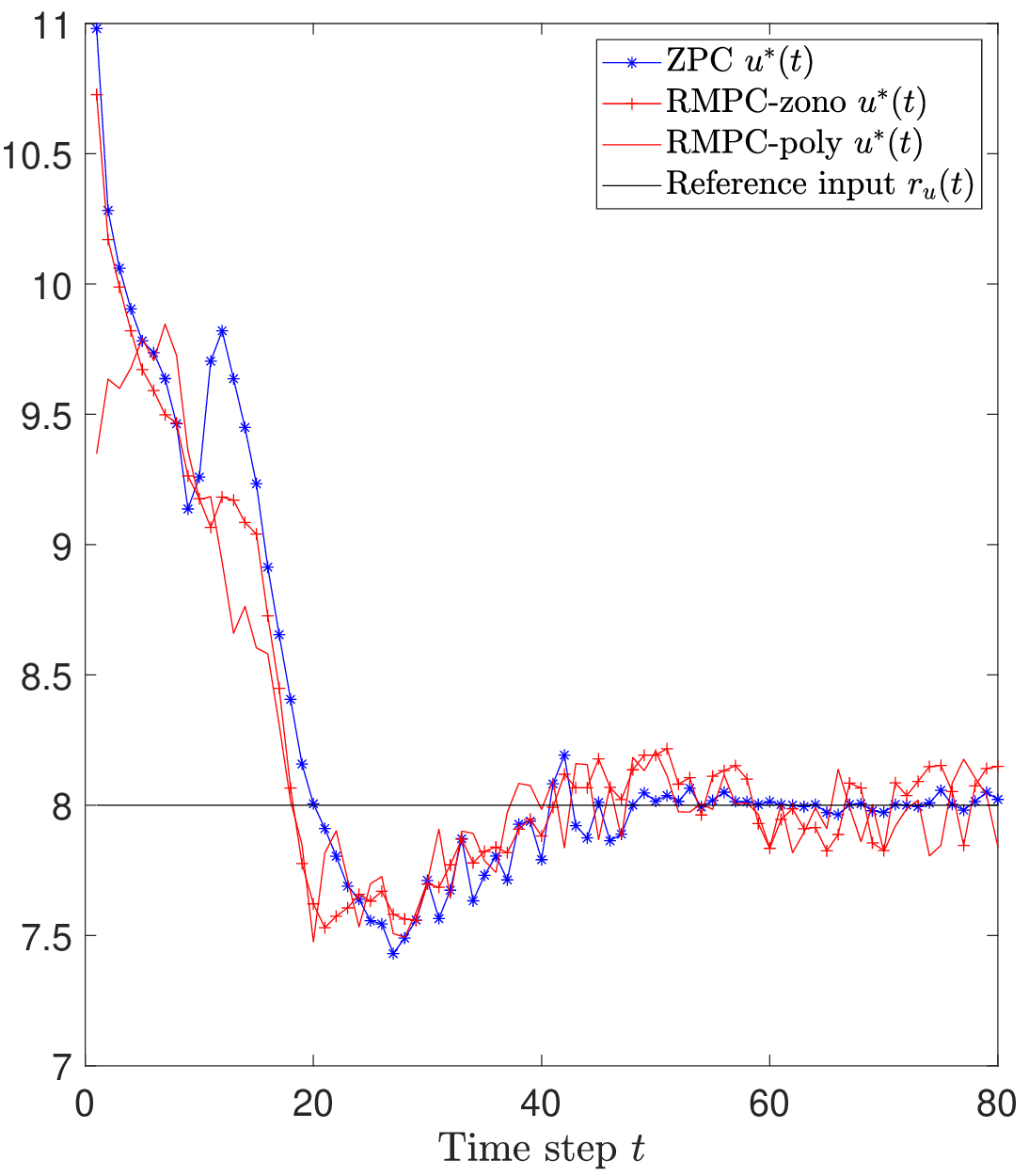}
        \caption{}
        \label{fig:upred_uref_400pt}
    \end{subfigure}
            \begin{subfigure}[h]{0.32\textwidth}
      \centering
        \includegraphics[width=\textwidth]{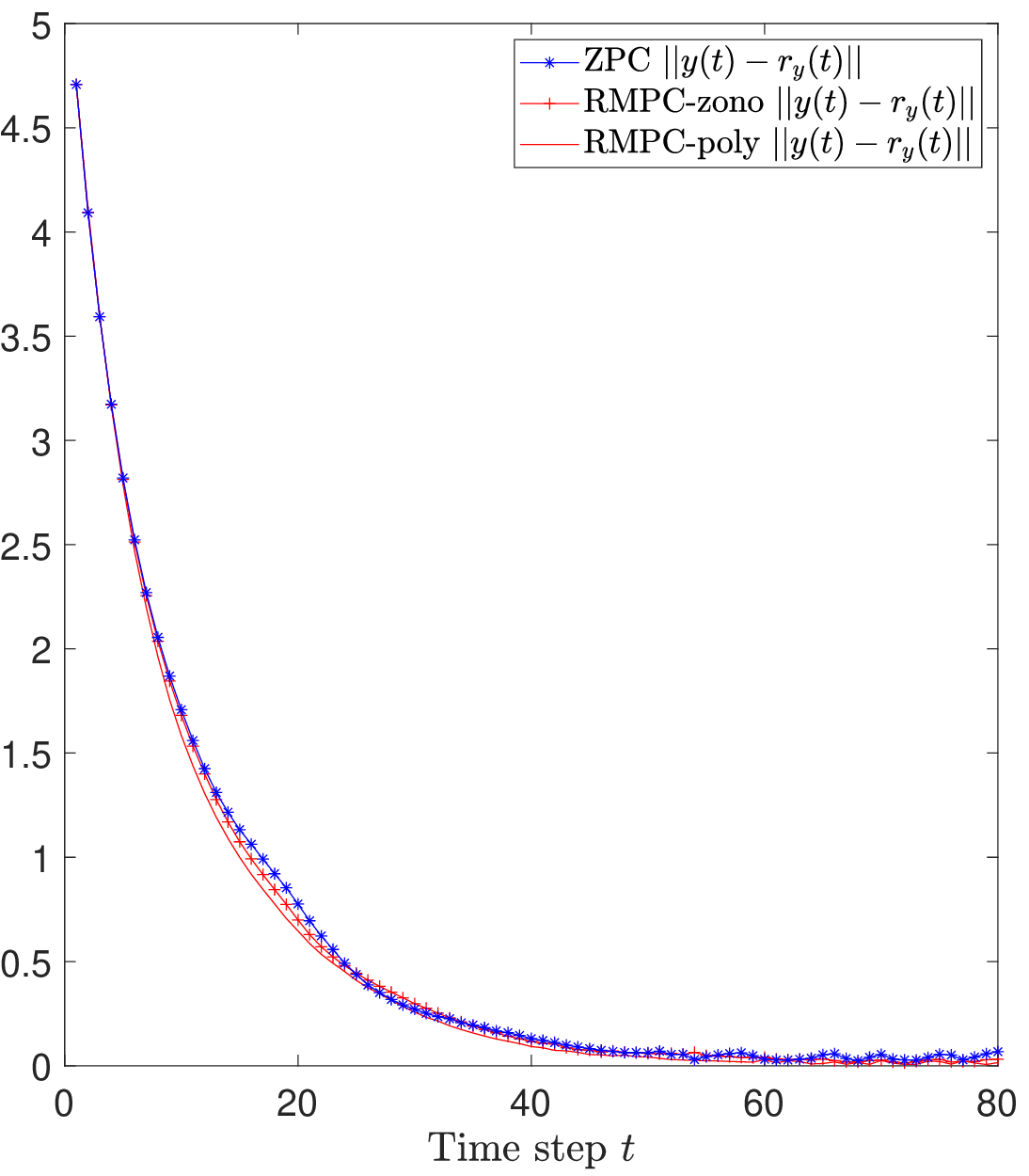}
        \caption{}
        \label{fig:yt2ref_400pt}
    \end{subfigure}
            \begin{subfigure}[h]{0.32\textwidth}
      \centering
        \includegraphics[width=\textwidth]{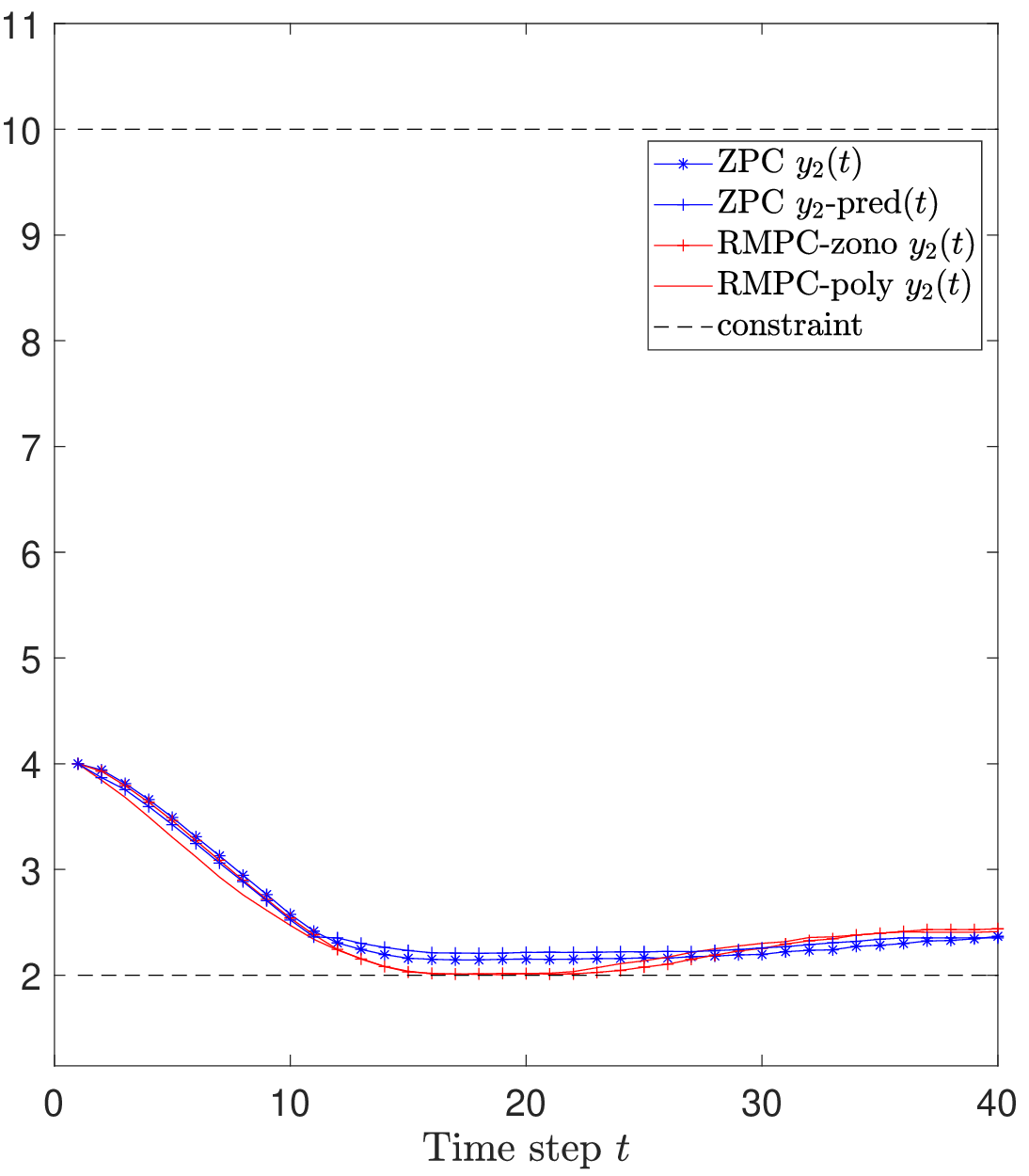}
        \caption{}
        \label{fig:robust}
    \end{subfigure}
\caption{Comparison between \sys{}, RMPC-poly, and RMPC-zono.}
    \label{fig:threeresults}
     \vspace{-3mm}
\end{figure*}

\begin{figure}[!htbp]
    \centering
    \includegraphics[scale=0.31]{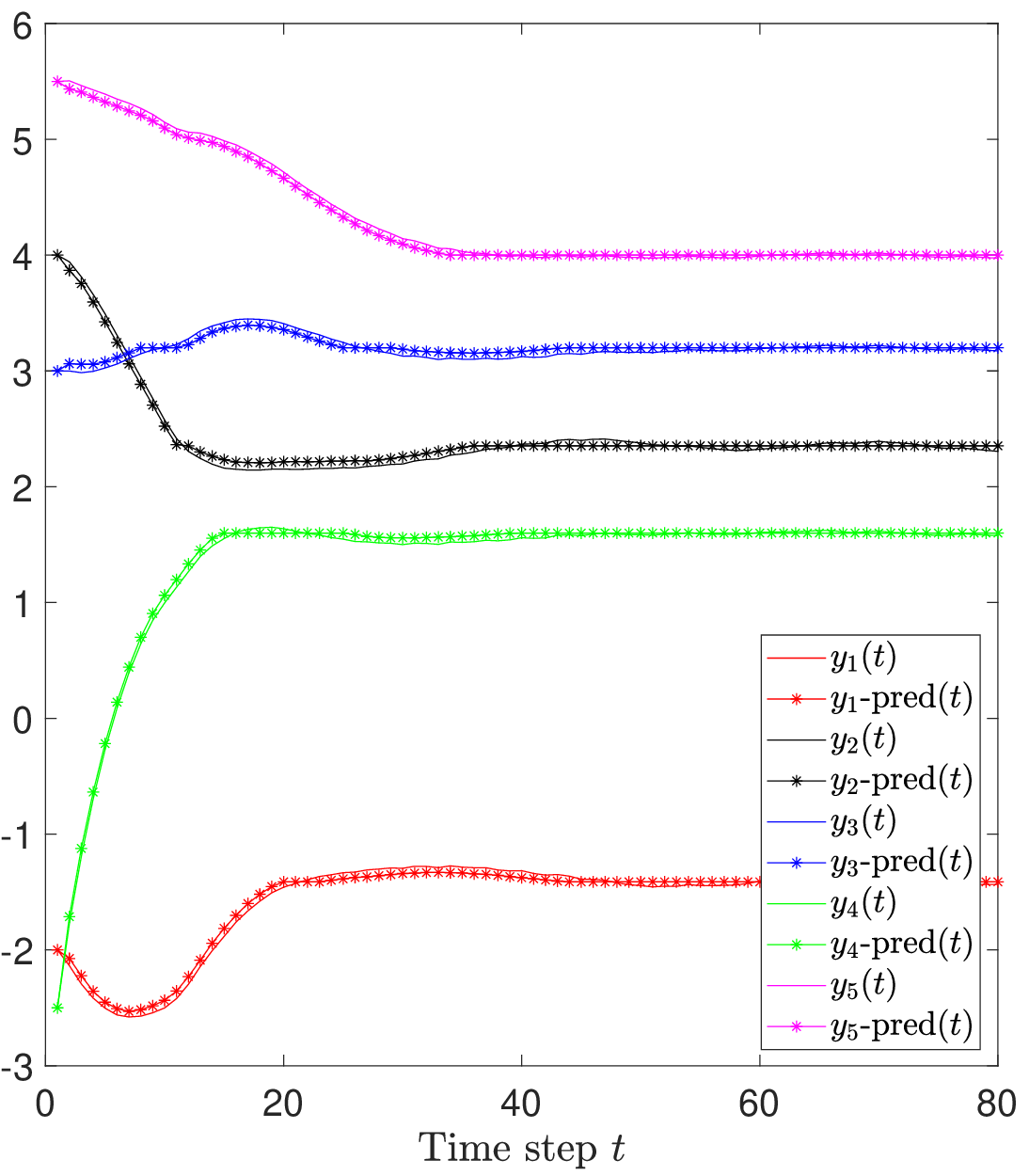}
    \caption{The predicted output and system trajectory for \sys{}.}
    \label{fig:ys_ypreds_400pt}
\end{figure}

\begin{figure*}[!t]
    \centering
            \begin{subfigure}[h]{0.45\textwidth}
      \centering
        \includegraphics[scale=0.30]{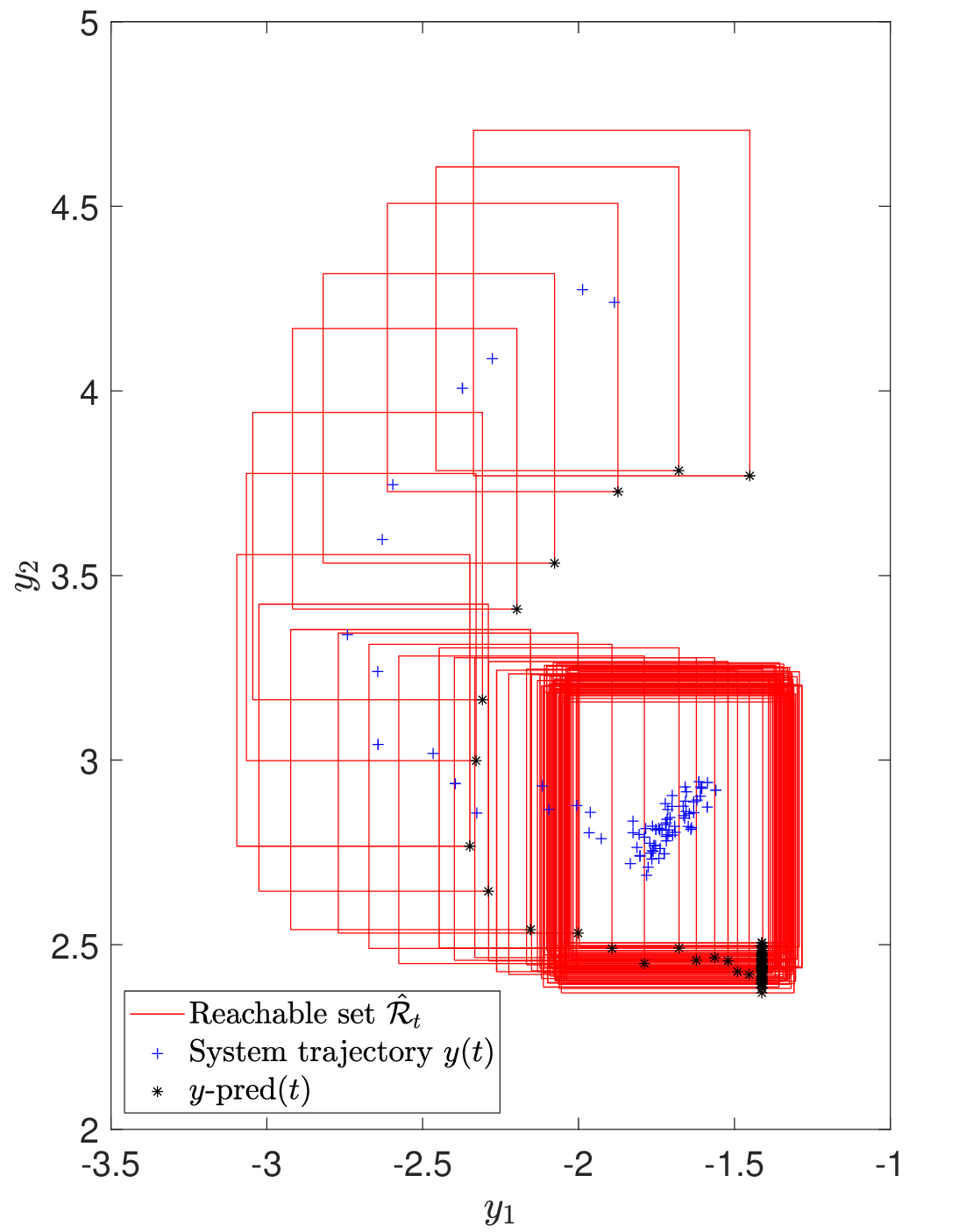}
        \caption{}
        \label{fig:y1y2_400pt_more_noise}
    \end{subfigure}
            \begin{subfigure}[h]{0.45\textwidth}
      \centering
        \includegraphics[scale=0.30]{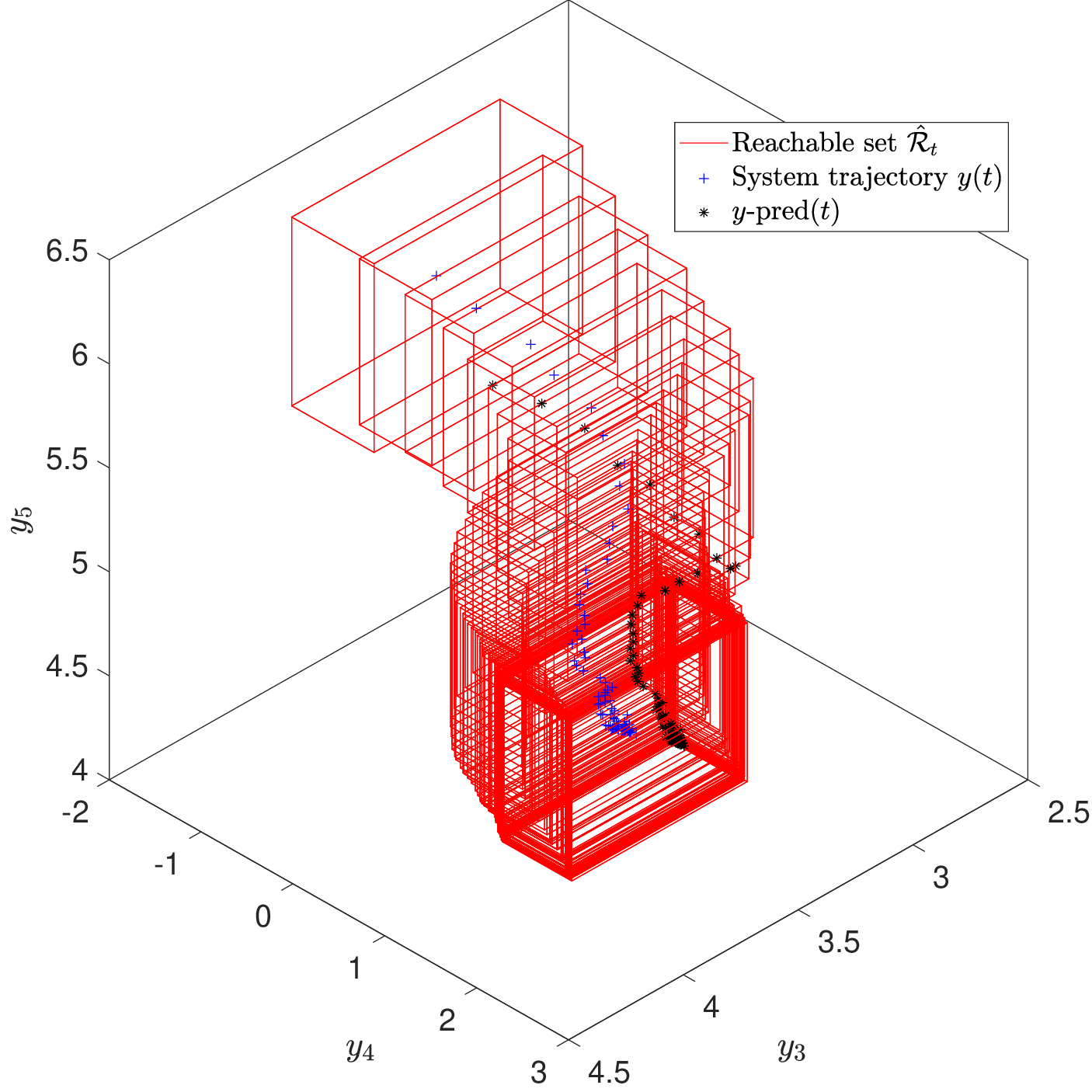}
        \caption{}
        \label{fig:y3y4_400pt_more_noise}
    \end{subfigure}
\caption{Projection of the reachable sets over the time steps with a higher magnitude of noise.}
    \label{fig:y_reach_400_morenoise}
      \vspace{-1mm}
\end{figure*}

\begin{figure*}[!t]
    \centering
            \begin{subfigure}[h]{0.32\textwidth}
      \centering
        \includegraphics[width=\textwidth]{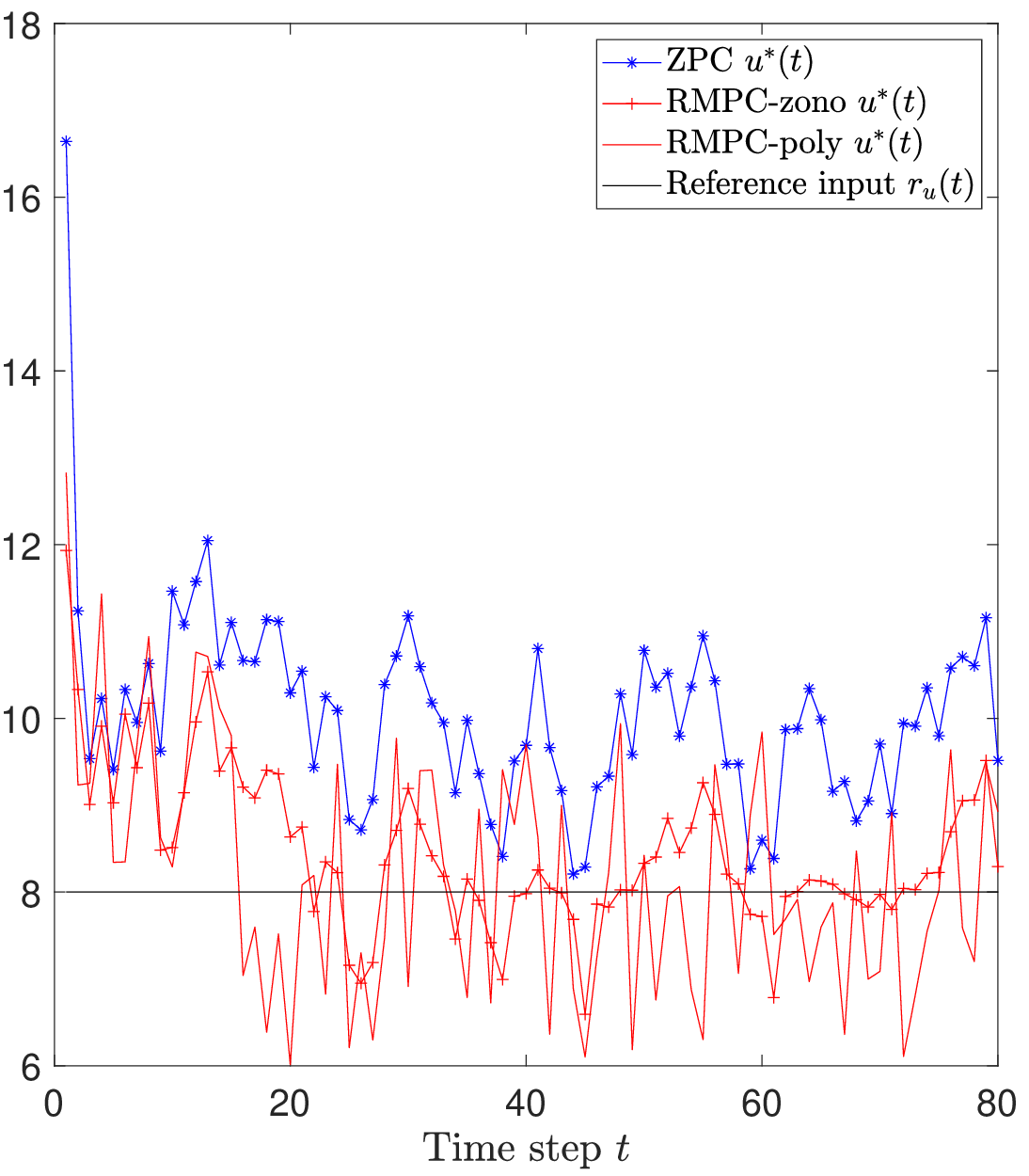}
        \caption{}
        \label{fig:upred_uref_400pt_morenoise}
    \end{subfigure}
            \begin{subfigure}[h]{0.32\textwidth}
      \centering
        \includegraphics[width=\textwidth]{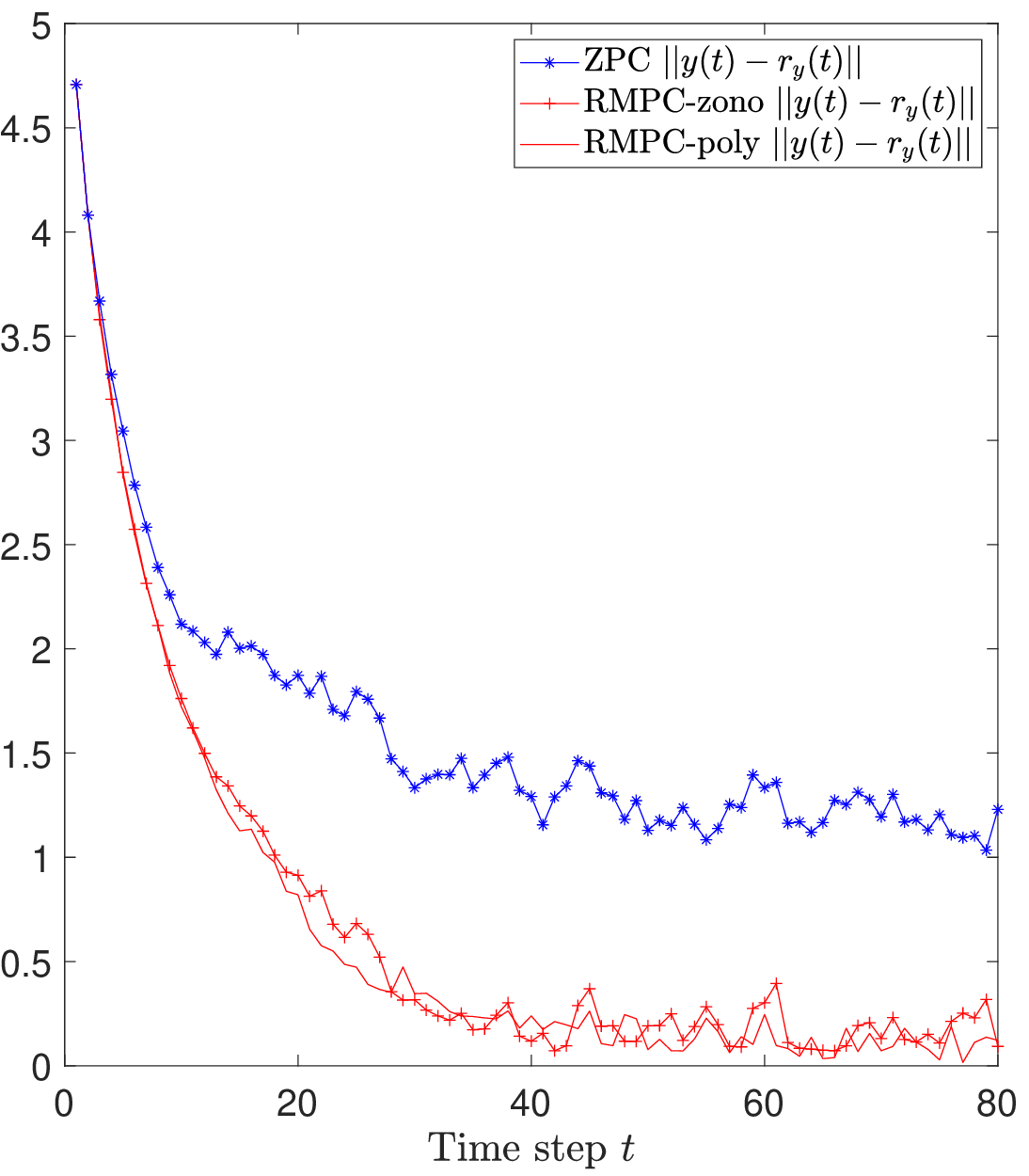}
        \caption{}
        \label{fig:yt2ref_400pt_morenoise}
    \end{subfigure}
            \begin{subfigure}[h]{0.32\textwidth}
      \centering
        \includegraphics[width=\textwidth]{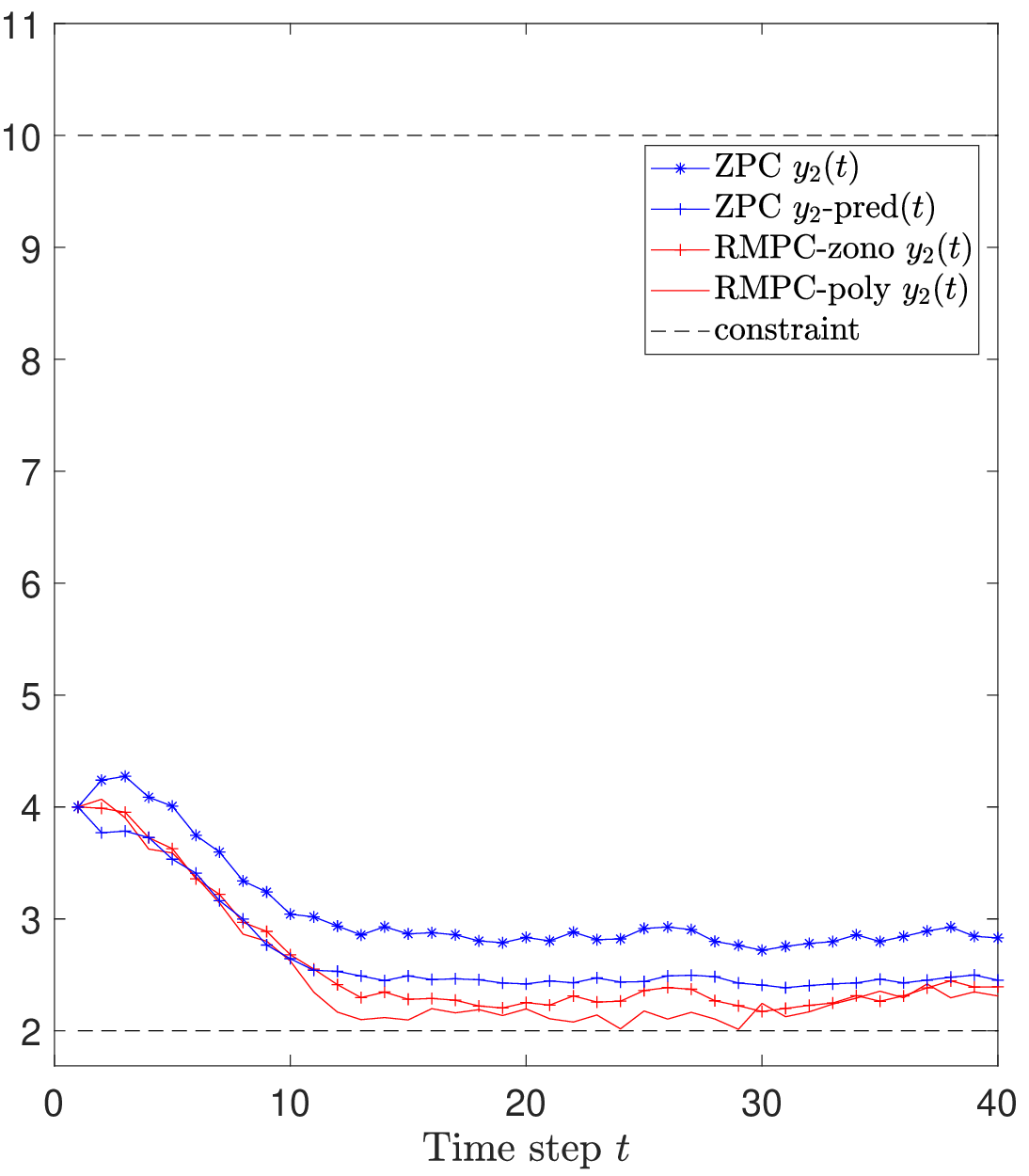}
        \caption{}
        \label{fig:addingrobust_more_noise}
    \end{subfigure}
\caption{Comparison between \sys{}, RMPC-poly, and RMPC-zono with a higher magnitude of noise.}
    \label{fig:threeresults_morenoise}
    \vspace{-4mm}
\end{figure*}

\begin{figure}[!htbp]
    \centering
    \includegraphics[scale=0.31]{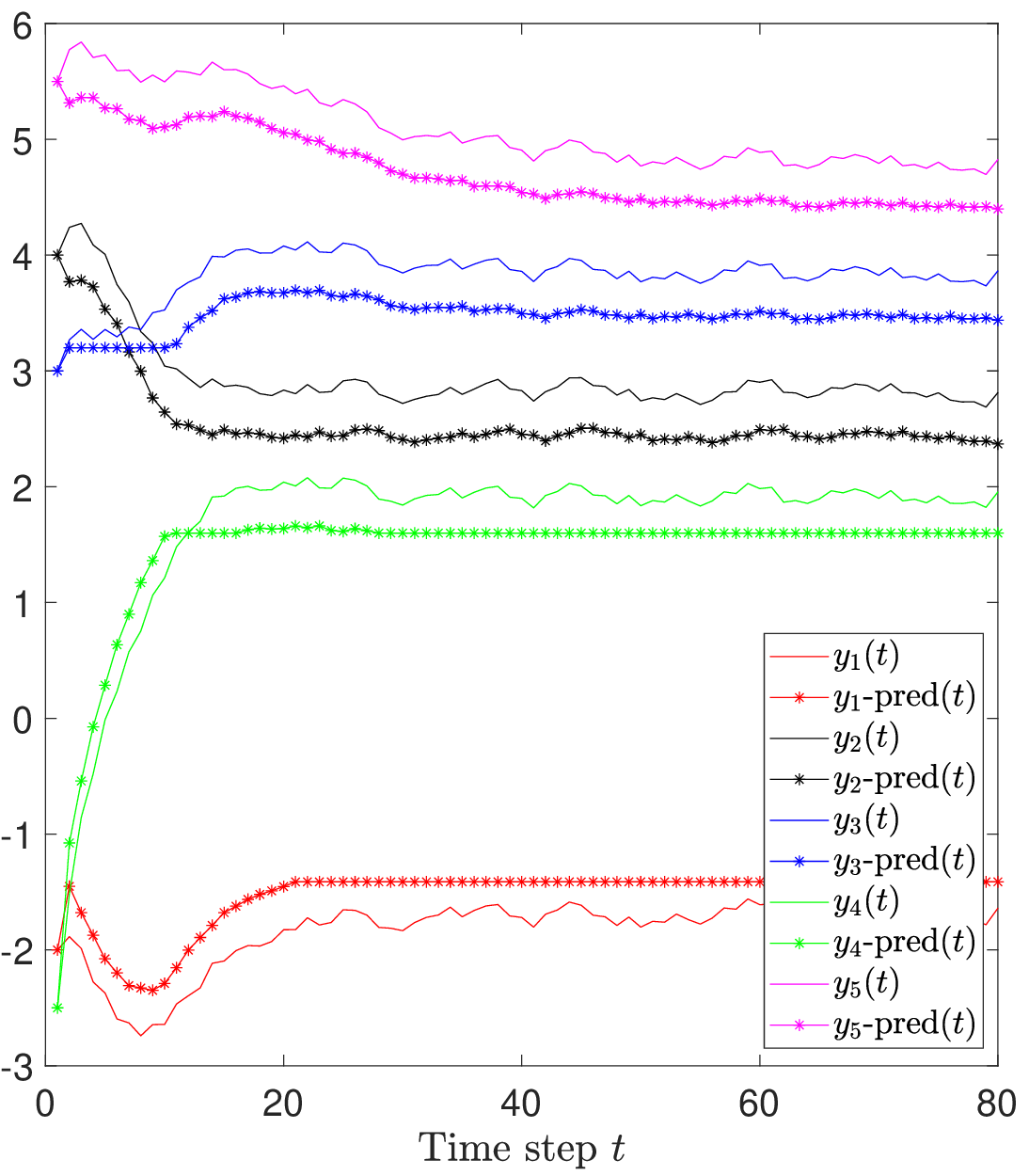}
    \caption{The predicted output and system trajectory for \sys{} in with a higher magnitude of noise.}
    \label{fig:ys_ypreds_400pt_morenoise}
\end{figure}




\subsection{Equivalence to MPC with Known Model in the Nominal Case}

The following theorem shows the equivalence between \sys{} and the nominal model-based MPC (with known model) in the noise-free case which is given by 
\begin{subequations}
\label{eq:mpc_}
\begin{alignat}{2}
&\!\min_{u,y}        &\qquad& \sum_{k=0}^{N-1} \norm{y_{t+k+1|t} -r_y(t+k+1)}_Q^2 \notag\\ 
&                  &      & ~~ +\norm{u_{t+k|t} - r_u(t+k)}_R^2 \label{eq:optProb_model_}\\
&\text{s.t.} &      & y_{t+k+1|t}= A y_{t+k|t} + Bu_{t+k|t} \\
&                  &      & u_{t+k|t} \in \mathcal{U}_{t+k},  \\
&                  &      & y_{t+k+1|t}\in \mathcal{Y}_{t+k+1}, ~\\ 
&                  &      & y_{t|t} = y(t), 
\end{alignat}
\end{subequations}
\begin{theorem}[Equivalence of \sys{} to MPC with known model in the nominal case]
 The nominal MPC with known model in \eqref{eq:mpc_}, and the \sys{} in \eqref{eq:optzonopc} result in equivalent closed-loop behaviour in the case of a noise-free system, i.e., $v(t)=0$ and $w(t)=0$, assuming that the input is persistently exciting of order $n +1$.
\end{theorem}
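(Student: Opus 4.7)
The plan is to show that in the noise-free case the data-driven constraints in \eqref{eq:optzonopc} collapse to the model-based dynamics equation in \eqref{eq:mpc_}, and that the slack variables are forced to zero so the output constraints coincide. The proof then follows since both optimization problems become identical.

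First I would instantiate Assumptions~\ref{as:noisew} and \ref{as:noisev} in the noise-free case. Since $w(t)\equiv 0$ and $v(t)\equiv 0$, all three bounding zonotopes $\mathcal{Z}_w$, $\mathcal{Z}_v$, $\mathcal{Z}_{Av}$ can be taken to be the singleton $\{0\}$, and the same holds for their concatenated matrix zonotopes $\mathcal{M}_w$, $\mathcal{M}_v$, $\mathcal{M}_{Av}$. Under the persistency of excitation of order $n+1$, the rank condition \eqref{eq:persiscondteq1} holds, so $\begin{bmatrix} Y_- \\ U_- \end{bmatrix}$ has full row rank and admits a right inverse. Consequently, the matrix zonotope in \eqref{eq:zonoAB} degenerates to the single matrix $\mathcal{M}_\Sigma = \{Y_+\begin{bmatrix} Y_- \\ U_-\end{bmatrix}^\dagger\} = \{\mathcal{G}(Y,U_-)\}$.

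Next I would identify $\mathcal{G}(Y,U_-)$ with $\begin{bmatrix}A & B\end{bmatrix}$. In the noise-free case the data satisfy $Y_+ = \begin{bmatrix} A & B\end{bmatrix}\begin{bmatrix} Y_- \\ U_-\end{bmatrix}$, so $\bigl(\mathcal{G}(Y,U_-)-\begin{bmatrix}A & B\end{bmatrix}\bigr)\begin{bmatrix} Y_- \\ U_-\end{bmatrix} = 0$. Because $\begin{bmatrix} Y_- \\ U_-\end{bmatrix}$ has full row rank, this difference vanishes, hence $\mathcal{M}_\Sigma = \{\begin{bmatrix}A & B\end{bmatrix}\}$. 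Since $\hat{\mathcal{R}}_{t|t} = \{y(t)\}$ and $\mathcal{Z}_{u,t+k} = \{u_{t+k|t}\}$ are singletons, and the noise zonotopes added in \eqref{eq:Rconst} vanish, an easy induction on $k$ using the linear map property \eqref{eq:linmap} and the Cartesian product \eqref{eq:cardprod} shows that the reachable set is itself a singleton
\begin{equation*}
\hat{\mathcal{R}}_{t+k+1|t} = \bigl\{A\, y_{t+k|t} + B\, u_{t+k|t}\bigr\}.
\end{equation*}

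Finally, since each reachable set is a single point, the lower and upper bounds coincide: $\hat{\mathcal{R}}_{l,t+k+1} = \hat{\mathcal{R}}_{u,t+k+1} = A y_{t+k|t} + B u_{t+k|t}$. The constraints \eqref{eq:const_r_up}--\eqref{eq:const_r_low} together with $s_u, s_l\geq 0$ then force $s_{u,t+k+1|t}=s_{l,t+k+1|t}=0$ and $y_{t+k+1|t} = A y_{t+k|t} + B u_{t+k|t}$, reproducing the dynamics constraint of \eqref{eq:mpc_}. The constraints \eqref{eq:const_y_up}--\eqref{eq:const_y_low} simplify to $y_{t+k+1|t}\in\mathcal{Y}_{t+k+1}$, matching the output constraint of \eqref{eq:mpc_}, while \eqref{eq:uconst} and \eqref{eq:y0const} are identical to their nominal counterparts. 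The cost functions coincide by inspection, so the two optimization problems share the same feasible set and objective, hence the same optimal $u^*(t)=u^*_{t|t}$ at every step, which yields identical closed-loop behavior.

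The main subtlety will be the step identifying $\mathcal{G}(Y,U_-)$ with $\begin{bmatrix}A & B\end{bmatrix}$: one must use the full row rank condition (which is exactly where persistency of excitation of order $n+1$ enters via Lemma~\ref{lm:exciting}) rather than merely invoking Lemma~\ref{lm:xkp1}, since Lemma~\ref{lm:xkp1} only guarantees equal action on $\begin{bmatrix} y(t) \\ u(t)\end{bmatrix}$ vectors that actually appear in the data, not on arbitrary predicted decision variables along the MPC horizon.
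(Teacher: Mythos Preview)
Your proposal is correct and follows essentially the same approach as the paper: both argue that in the noise-free case the matrix zonotope $\mathcal{M}_\Sigma$ collapses to the single matrix $Y_+\begin{bmatrix} Y_- \\ U_-\end{bmatrix}^\dagger=\begin{bmatrix}A & B\end{bmatrix}$ via the rank condition from persistency of excitation, so that each reachable set $\hat{\mathcal{R}}_{t+k|t}$ is a singleton and the two optimization problems coincide. Your version is simply more complete, spelling out why the slack variables vanish and why the output constraints match, and your closing remark about needing the full row rank (rather than Lemma~\ref{lm:xkp1} alone) to extend the identification to arbitrary predicted pairs is a valid refinement that the paper leaves implicit.
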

\begin{proof}
\tb{In the noise-free case we have 
$$\mathcal{M}_w=\mathcal{M}_v=\mathcal{M}_{Av}=0$$ 
which if inserted in \eqref{eq:zonoAB} results in:
\begin{align}
  \mathcal{M}_\Sigma = Y_+ \begin{bmatrix} Y_- \\ U_-\end{bmatrix}^\dagger  
\end{align}
which is a single model equals} to $\mathcal{G}(Y,U_-)$ in Lemma~\ref{lm:xkp1} under the assumption that rank$\Big(\begin{bmatrix} Y_-^T & U_-^T\end{bmatrix}^T\Big)$ = $n+m$ as the input is persistently exciting of order $n +1$. Given that $\mathcal{M}_\Sigma$ is a single true model, then $\hat{\mathcal{R}}_t$ is a single point equal to $x(t)$ as in the nominal MPC in \eqref{eq:mpc_}.  
\end{proof}

\section{Evaluation}\label{sec:eval}
In this section, we will compare the data-driven ZPC without model knowledge to two model-based schemes, where we assume model knowledge. We start by describing the model-based schemes, before presenting the experimental simulation results. 
\subsection{Model-Based Schemes}\label{sec:MPC}
Given the system model in \eqref{eq:sys}, the output and input reference trajectories $r_y(t) \in \mathbb{R}^p$ and $r_u(t) \in \mathbb{R}^m$, the input and output constraints $\mathcal{U}_t$ and $\mathcal{Y}_t$, the weighting matrices $R$ and $Q$, the input zonotope $\mathcal{Z}_{u,t}$, and the noise zonotopes $\mathcal{Z}_{w}$ and $\mathcal{Z}_{v}$, respectively, we implemented two MPC schemes as follows: 
\begin{itemize}
    \item RMPC-poly: A robust MPC scheme using polytopes where we transform the given input, constraints and noise zonotopes into polytopes. We then solve \eqref{eq:mpc_} using open-loop robust MPC policies with constraint tightening minimizing the nominal cost.
    \item RMPC-zono: We use the model information in \sys{}. More specifically, instead of  $\mathcal{M}_\Sigma$,   $\begin{bmatrix}
    A & B
\end{bmatrix}$ is being used in \eqref{eq:optzonopc}, minimizing the nominal cost. 
\end{itemize}

\subsection{Comparison between ZPC and the Model-Based Schemes}

To demonstrate the usefulness of \sys{}, we consider the predictive control of a five dimensional system which is a discretization of the system used in \citep[p.39]{conf:thesisalthoff} with sampling time $0.05$ sec. The discrete system dynamics are 

\begin{align*}
A&=\begin{bmatrix}
    0.9323 &  -0.1890   &      0   &      0   &      0 \\
    0.1890 &   0.9323  &       0  &       0   &      0 \\
         0 &        0  &  0.8596  &   0.0430  &        0 \\
         0 &         0   & -0.0430    & 0.8596      &    0 \\
         0 &         0  &        0    &      0   &  0.9048
\end{bmatrix}, \\
B&=\begin{bmatrix} 
    0.0436&
    0.0533&
    0.0475&
    0.0453&
    0.0476
    \end{bmatrix}^T.
\end{align*}

We make use of the CORA toolbox \citep{conf:cora} along with implementations from \citep{conf:felix} \tb{in Matlab along with the Multi-Parametric Toolbox \citep{conf:mpt} and MOSEK solver \citep{conf:mosek}}. The input set is $\mathcal{Z}_{u,t}{=}\langle 7,$ $19\rangle$. We start by computing the reachable set when there is random noise sampled from the zonotopes $\mathcal{Z}_w{=}\langle0,[0.01 ,\dots ,$ $0.01]^T\rangle$ and $\mathcal{Z}_v=\zono{0,[0.002 ,\dots , 0.002]^T}$. We collect $400$ random input-output pairs in the offline data-collection phase. We start by plotting the reachable sets, the system trajectories $y(t)$, and the predicted output $y_{t+1|t}, ... , y_{k+N+1|t}$ of \eqref{eq:optzonopc} over the time steps in Fig.~\ref{fig:y_reach_400} during the online control phase. Fig.~\ref{fig:y_reach_400} shows the system trajectory and the predicted output inside the reachable sets. 


We perform the control schemes \sys{}, RMPC-poly, and RM-PC-zono using the same realization of random noise. 
The control inputs $u(t)$ for the model based predictive control schemes and for \sys{} along with the reference input are presented in Fig.~\ref{fig:upred_uref_400pt}. We show the norm $\norm{y(t)-r_y(t)}$ for the \sys{}, RMPC-poly, and RMPC-zono in Fig.~\ref{fig:yt2ref_400pt}. Fig.~\ref{fig:yt2ref_400pt} shows that \sys{} is comparable to the RMPC-poly, and RMPC-zono given the aforementioned noise. Fig.~\ref{fig:robust} shows that the constraint $1.9 \leq y_2(t) \leq 10$ is satisfied for all control schemes. ZPC acts a bit more conservative compared to the model-based schemes due to the lack of model knowledge. Fig.~\ref{fig:ys_ypreds_400pt} shows the alignment between the system trajectory and the predicted output $y_{t+k+1|t}$ for \sys{}. \tb{The extra computation in ZPC in comparison to RMPC-zono lies in the multiplication between the matrix zonotope $\mathcal{M}_\Sigma$ and the reachable sets $\hat{\mathcal{R}}_{t+k|t} \times \mathcal{Z}_{u,t+k}$ which depends on the number of the generators of $\mathcal{M}_\Sigma$. A reduce operator can be applied on $\mathcal{M}_\Sigma$ to decrease the number of generators at the cost of over approximation. Table \ref{tab:execTime} shows the mean and standard deviation of the execution time of the three schemes. \sys{} takes around $0.399$ \SI{}{sec} on average to run in comparison to  0.057 \SI{}{sec} and 0.196 \SI{}{sec} for RMPC-poly and RMPC-zono, respectively. }  


Next, we consider noise of a magnitude 10 times as high as in the previous test case. More specifically, we consider $\mathcal{Z}_w=\zono{0,\begin{bmatrix}0.1 ,\dots , 0.1\end{bmatrix}^T}$ and $\mathcal{Z}_v=\zono{0,\begin{bmatrix}0.02 ,\dots , 0.02\end{bmatrix}^T}$. We start by plotting again the reachable sets in Fig.~\ref{fig:y_reach_400_morenoise} which are more conservative than before. Note that having a higher magnitude of noise in the data increases the number of possible models in $\mathcal{M}_\Sigma$ which in turns increases the size of the reachable sets and affects the overall performance. The control inputs $u(t)$ for the model based predictive control schemes and \sys{} along with the reference input are presented in Fig.~\ref{fig:upred_uref_400pt_morenoise}. We plot $\norm{y(t)-r_y(t)}$ in Fig.~\ref{fig:yt2ref_400pt_morenoise} with again $400$ data points in the data-collection phase. Fig.~\ref{fig:addingrobust_more_noise} shows the effect of a higher magnitude of noise on satisfying the constraint~$1.9~\leq~y_2(t)~\leq~10$. Fig.~\ref{fig:ys_ypreds_400pt_morenoise} shows the system trajectory and the predicted output $y_{t+k+1|t}$ for \sys{}.

\section{Conclusion}\label{sec:conc}
We propose a zonotopic data-driven predictive control sche-me named \sys{}. Our proposed controller consists of two phases: 1) an offline data-collection phase, during which a matrix zonotope is learned from data as a data-driven system representation, and 2) an online control phase. During the online control phase, we compute data-driven reachable sets based on a matrix zonotope recursion. In the noise-free case, \sys{} is equivalent to a nominal MPC scheme. In the case of process and measurement noise \sys{} provides robust constraint satisfaction. We show the effectiveness of the data-driven control scheme \sys{} in numerical experiments compared to two model-based predictive control schemes. \tb{For future work, we will consider guaranteeing recursive feasibility and nonlinear systems. Furthermore, we will consider testing on high dimensional systems.}

\begin{table}[t]
\centering 
\caption{\tb{The mean and standard deviation of the execution time of RMPC-poly, and RMPC-zono, and \sys{} in \SI{}{sec}.}} 
\label{tab:execTime}
\centering 
\normalsize
\tb{\begin{tabular}{c  c c c c c}
\toprule
Scheme & Mean & Std \\ 
\midrule
RMPC-poly & 0.057 & 0.028 \\
RMPC-zono &0.196 & 0.050\\
\sys{}  &0.399 & 0.090 \\
\bottomrule
\end{tabular}}
\end{table}

\section*{Acknowledgement}
This work was supported by the Swedish Research Council, the Knut and Alice Wallenberg Foundation, the Democritus project on Decision-making in Critical Societal Infrastructures by Digital Futures, the European Unions Horizon
   2020 Research and Innovation programs under the CONCORDIA cyber security project (GA No. 830927) and the Marie Sklodowska-Curie grant agreement No.\ 846421. 
\bibliography{references}

\end{document}